\documentclass[12pt,a4paper]{article}
\usepackage[top=2cm, bottom=2cm, left=2cm, right=2cm]{geometry}
\usepackage{mathptmx} % Times Roman للنصوص والمعادلات
\usepackage{microtype} % تحسين الحروف والمسافات
\usepackage{adjustbox}
\usepackage{amssymb,amsthm,amsmath,latexsym}
\usepackage{authblk}
\usepackage[symbol]{footmisc}
\usepackage{graphicx}
\usepackage[table]{xcolor} % لتلوين الجداول
\usepackage{colortbl} % لتلوين الأعمدة/الصفوف/الخلايا
\usepackage{booktabs} % لتحسين مظهر الجداول
\usepackage{epstopdf} % لتحسين جودة EPS
\usepackage[T1]{fontenc}
\usepackage[utf8]{inputenc}
\usepackage{multirow}
\usepackage{pdfpages}
\usepackage{subcaption}
\usepackage{float}
\usepackage{orcidlink}
\newtheorem {theorem}{Theorem}

\newtheorem {remark}{Remark}
\numberwithin{equation}{section}

\begin{document}
\title{Shiha Distribution: Statistical Properties and Applications to Reliability Engineering and Environmental Data}
\newcommand{\orcidauthorA}{\orcidlink{0000-0001-9539-9488}}% Add \orcidB{} behind the author's name
\author{
\begin{flushleft}
F. A. Shiha$^{}$\orcidauthorA{}\\
{\small
$^{}$Department of Mathematics, Faculty of Science, Mansoura University, 35516 Mansoura, Egypt\\

$^{}$Correspondence: Email: fshiha@mans.edu.eg
}
\end{flushleft}
}
\date{}
\maketitle

\begin{abstract}
This paper introduces a new two-parameter distribution, referred to as the Shiha distribution, which provides a flexible model for skewed lifetime data with either heavy or light tails. The proposed distribution is applicable to various fields, including reliability engineering, environmental studies, and related areas. We derive its main statistical properties, including the moment generating function, moments, hazard rate function, quantile function, and entropy. The stress--strength reliability parameter is also derived in closed form. A simulation study is conducted to evaluate its performance. Applications to several real data sets demonstrate that the Shiha distribution consistently provides a superior fit compared with established competing models, confirming its practical effectiveness for lifetime data analysis.
\end{abstract}

\textbf{Keywords:} Lifetime distributions; reliability analysis; quantile; entropy; goodness of fit.

\textbf{Mathematics Subject Classification:} 62E15, 62H10, 62H12.

%%% ----------------------------------------------------------------------
\section {Introduction}
A wide range of probability models exists, although some datasets require distributions with greater flexibility and improved fit. Since traditional models do not always capture such patterns, researchers continue to develop and generalize probability distributions to enhance prediction and estimation. This ongoing effort has produced many models capable of adapting to modern reliability data through various construction techniques.
One common approach to increasing flexibility is to construct mixture models by combining two or more distributions with suitable mixing proportions. Such mixtures can represent complex shapes that single distributions often fail to capture. Several three-component mixtures have appeared in the literature. For example, mixing Exp$(\omega)$, Exp$(2\omega)$, and Gamma$(2, 2\omega)$ distributions led to the one-parameter Sarhan–Tadj–Hamilton distribution, studied by Sarhan et al. \cite{sarhan} and later extended through its transmuted version by Shiha et al. \cite{shiha}. Using mixtures of  Exp$(\omega)$, Gamma$(2, \omega)$ , and Gamma$(3, \omega)$  with different mixing proportions, Shanker \cite{shanker2, shanker3} and Welday and Shanker \cite{welday} introduced the Aradhana, Sujatha, and generalized Aradhana distributions.

A classical two-component mixture of Exp$(\omega)$ and Gamma$(2,\omega)$ with mixing proportions $\frac{\omega}{1+\omega}$ and $\frac{1}{1+\omega}$  is the Lindley distribution proposed by Lindley \cite{lind}, which later inspired several extensions, including the power Lindley distribution \cite{ghit}, a generalized Lindley distribution \cite{nadar}, and a three-parameter generalized Lindley distribution \cite{ekho}, among others. Another two-component mixture, based on Exp$(\omega)$ and Gamma$(3,\omega)$ has also been studied, leading to models such as the xgamma distribution \cite{sen}, its alpha-power transformation \cite{shukla}, the Akash distribution \cite{shanker1}, and the Chris–Jerry distribution \cite{chri}.
Although many distributions are widely used in lifetime analysis, there are situations where they do not provide an adequate fit in practice. To address these limitations, we introduce a new two-parameter distribution defined as a mixture of Exp$(\omega)$, Exp$(2\omega)$, and Gamma$(2, 2\omega)$ with appropriate mixing weights. The proposed model has the ability to generate flexible hazard rate patterns, making it suitable for modeling various types of right-skewed lifetime data.
The rest of this article is organized as follows: Section 2 introduces the Shiha distribution and investigates its reliability properties, including the hazard rate function and the stress--strength reliability. Some of its statistical characteristics are studied in Section 3, including the quantile function, moments, skewness, kurtosis, and entropy. Section 4 presents the parameters estimation using the maximum likelihood method. Section 5 applies the proposed distribution to four real lifetime datasets from reliability engineering and environmental studies, and compares its fit with the alpha-power transformed xgamma, power Lindley, three-parameter generalized Lindley, Chris–Jerry, and Akash distributions. Finally,
Section 6 concludes the paper.

\section{The Shiha distribution}
The probability density function (pdf) of the newly proposed two-parameter lifetime distribution is defined as follows:
\begin{equation}\label{pdf}
  f(y;\omega, \eta) = \frac{\omega}{\omega+3\,\eta}\left[\omega + (2\,\eta + 8\,\omega\,\eta y)e^{-\omega y}\right] e^{-\omega y},
\end{equation}

where $ y\geq 0, \, \omega > 0, \, \eta > 0$, we shall refer to this distribution as Shiha distribution ($Y\sim Sh(\omega, \eta))$. Note that the proposed distribution reduces to the $\text{EXP}(\omega)$ distribution when $\eta=0$. This distribution can be represented as a mixture of Exp($\omega$), Exp($2\omega$), and $Gamma(2,2\omega)$ with their mixing proportions
$p_1 = \frac{\omega}{\omega+3 \,\eta}, \quad p_2 = \frac{\eta}{\omega+3\,\eta}, \quad p_3 = \frac{2\eta}{\omega+3\,\eta}$, respectively. This means that
\begin{equation}\label{mix}
  f(y;\omega,\eta) = p_1 \text{Exp}(\omega) + p_2 \text{Exp}(2\omega) + p_3 \text{Gamma}(2, 2\omega).
\end{equation}
The corresponding cumulative distribution function (cdf) of \eqref{pdf} is
\begin{equation}\label{cdf}
  F(y;\omega,\eta) = 1 - \frac{1}{\omega+3\,\eta}\left[\omega + (3\,\eta + 4\,\omega \,\eta y)e^{-\omega y}\right] e^{-\omega y},
\end{equation}
and its survival function is
\begin{equation}\label{sur}
  S(y;\omega,\eta) = \frac{1}{\omega+3\,\eta}\left[\omega + (3\,\eta + 4\,\omega \,\eta y)e^{-\omega y}\right] e^{-\omega y}, \quad y \geq 0, \, \omega > 0, \, \eta > 0.
\end{equation}
While the hazard rate function takes the form:
\begin{equation}\label{hazard}
h(y;\omega,\eta) = \frac{f(y;\omega,\eta)}{S(y;\omega,\eta)}=\frac{\omega \left[\omega + (2\eta + 8\omega\eta y)e^{-\omega y}\right]}{\omega + (3\eta + 4\omega\eta y)e^{-\omega y}}, \quad y \geq 0, \, \omega > 0, \, \eta > 0.
\end{equation}

\begin{theorem}
  The Hazard rate function $h(y;\omega,\eta)$ attains a unique maximum at
\[
y=\frac{1}{\omega}\left[\,W\!\Big(\frac{4\eta}{\omega}e^{-5/4}\Big)+\tfrac{5}{4}\right],
\]
with maximum value
\[
h_{\max} = \omega\;\frac{2W\!\left(\tfrac{4\eta}{\omega}e^{-5/4}\right)+1}
{W\!\left(\tfrac{4\eta}{\omega}e^{-5/4}\right)+1},
\]
where $W(\cdot)>0$ denotes the Lambert $W$ function.
\end{theorem}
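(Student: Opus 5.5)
The plan is to reduce the problem to a one-variable function of the dimensionless time $t=\omega y$ and to study the sign of its derivative. Put $a=\eta/\omega>0$ and $u=e^{-t}$. Then \eqref{hazard} reads $h=\omega\,g(t)$ with
\[
g(t)=\frac{N(t)}{D(t)},\qquad N(t)=1+a(2+8t)u,\qquad D(t)=1+a(3+4t)u .
\]
Maximizing $h$ over $y\ge 0$ is therefore the same as maximizing $g$ over $t\ge 0$.

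\textbf{Derivative and sign function.} Using $u'=-u$, I will compute
\[
N'=a u\,(6-8t),\qquad D'=a u\,(1-4t).
\]
This gives
\[
N'D-ND'=a u\Big[(5-4t)+a u\big((6-8t)(3+4t)-(1-4t)(2+8t)\big)\Big].
\]
The key simplification I expect is that the quadratic terms cancel:
\[
(6-8t)(3+4t)=18-32t^2,\qquad (1-4t)(2+8t)=2-32t^2,
\]
so the bracketed difference is the constant $16$. Since $a u/D^2>0$, the sign of $g'(t)$ equals the sign of
\[
\varphi(t)=5-4t+16a e^{-t}.
\]
The function $\varphi$ is strictly decreasing, satisfies $\varphi(0)=5+16a>0$, and tends to $-\infty$. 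Hence it has exactly one zero $t^*>0$. Consequently $g$ increases on $(0,t^*)$ and decreases afterwards, so $t^*$ is the unique global maximizer. This sign analysis is the main step to get right, and it rests on the constant $16$ above.

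\textbf{Solving for the critical point.} Set $s=t-\tfrac54$. The equation $\varphi(t)=0$ becomes $4s=16a e^{-5/4}e^{-s}$, that is,
\[
s\,e^{s}=4a e^{-5/4}=\frac{4\eta}{\omega}e^{-5/4}>0.
\]
Its solution is $s=W\!\big(\tfrac{4\eta}{\omega}e^{-5/4}\big)$, taken on the principal branch, which is positive for a positive argument. Returning to $y=t/\omega$ gives the stated location of the maximum.

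\textbf{Evaluating the maximum.} At the critical point I will use $16a u=4t-5=4s$, so $a u=s/4$. Substituting $2+8t=12+8s$ and $3+4t=8+4s$ gives
\[
N=1+3s+2s^2=(2s+1)(s+1),\qquad D=(s+1)^2 .
\]
Therefore
\[
h_{\max}=\omega\,\frac{2s+1}{s+1},
\]
which is the claimed value.
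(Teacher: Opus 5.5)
Your proposal is correct and follows essentially the same route as the paper. Both substitute $t=\omega y$, find that the sign of $h'$ is governed by $16\eta+\omega e^{t}(5-4t)$ (your $\varphi(t)$ after dividing by $\omega e^{t}$), shift by $\tfrac54$ to reach the Lambert $W$ equation, and substitute back. Your argument that $\varphi$ is strictly decreasing with $\varphi(0)>0$ and $\varphi\to-\infty$ justifies the sign change that the paper only asserts, and your evaluation via $au=s/4$, giving $N=(2s+1)(s+1)$ and $D=(s+1)^2$, is a tidier version of the paper's rational-function-in-$t^*$ computation.
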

\begin{proof}
Let $t=\omega y$, then
\[
h(y;\omega,\eta)=:h(t)
=\omega\;\frac{8\eta t+2\eta+\omega e^{t}}{4\eta t+3\eta+\omega e^{t}}, \qquad t\ge0
\]
The derivatives is
\[
\begin{aligned}
h'(t) &=\frac{ \omega\big[(8\eta+\omega e^{t})(4\eta t+3\eta+\omega e^{t})
     -(8\eta t+2\eta+\omega e^{t})(4\eta+\omega e^{t})\big]}{(4\eta t+3\eta+\omega e^{t})^2} \\[6pt]
     &=\frac{\omega\eta\big(16\eta+\omega e^{t}(5-4t)\big)}{(4\eta t+3\eta+\omega e^{t})^2}.
    \end{aligned}
\]
Since $\omega, \eta >0$, the derivative is equal zero at
\begin{equation}\label{valuet}
\big(t^*-\tfrac{5}{4}\big)e^{t^*}=\frac{4\eta}{\omega}.
\end{equation}
Let \(u=t^*-\tfrac{5}{4}\). Then
\[
u e^{u} = \frac{4\eta}{\omega}e^{-5/4}.
\]
By using the definition of Lambert $W$ function (see Jodrá \cite{jodra} for more details), we get
\[
u = W\!\Big(\frac{4\eta}{\omega}e^{-5/4}\Big),
\qquad
t^{*}=W\!\Big(\frac{4\eta}{\omega}e^{-5/4}\Big)+\frac{5}{4}
\]
Notice that $h'(t)>0 $ for $ 0\le t<t^{*}$ and
$h'(t)<0$  for  $t>t^{*}$, therefore \(h(t)\) attains a unique maximum at \(t^{*}\),
\[
h_{\max}=h(t^{*})=\omega\frac{8\eta t^{*}+2\eta+\omega e^{t^{*}}}{4\eta t^{*}+3\eta+\omega e^{t^{*}}},
\qquad t^*=\omega y^*.
\]
From (\ref{valuet}), we have $ \omega e^{t^*}=\frac{4\eta}{\big(t^*-\tfrac{5}{4}\big)}$, then we obtain
\[
h_{\max}=h(t^*)=\omega\;\frac{2(16(t^*)^2-16t^*+3)}{16(t^*)^2-8t^*+1},
\qquad t^*=\omega y^*.
\]
Substituting $t^{*}=W\!\Big(\frac{4\eta}{\omega}e^{-5/4}\Big)+\frac{5}{4}$, $t^* = \omega y^*$ yields the result.
\end{proof}
\begin{remark} The hazard rate function of Shiha distribution satisfies
\[
h(0;\omega,\eta)=\frac{\omega(\omega+2\, \eta)}{\omega+3\,\eta}< \omega, \qquad \lim_{y \to \infty} h(y;\omega,\eta)=\omega.
\]
Since $W(\cdot)>0$ for positive argument, we have $h_{\max}>\omega$. Moreover,
\[
\lim_{\eta\to0^+} h_{\max}=\omega,\qquad
\lim_{\eta\to\infty} h_{\max}=2\omega,
\]
so $\omega < h_{\max} < 2\omega$ for all $\eta>0 $.
\end{remark}

\begin{figure}[H]
  \centering
  \includegraphics[width=14cm, height=4.5cm]{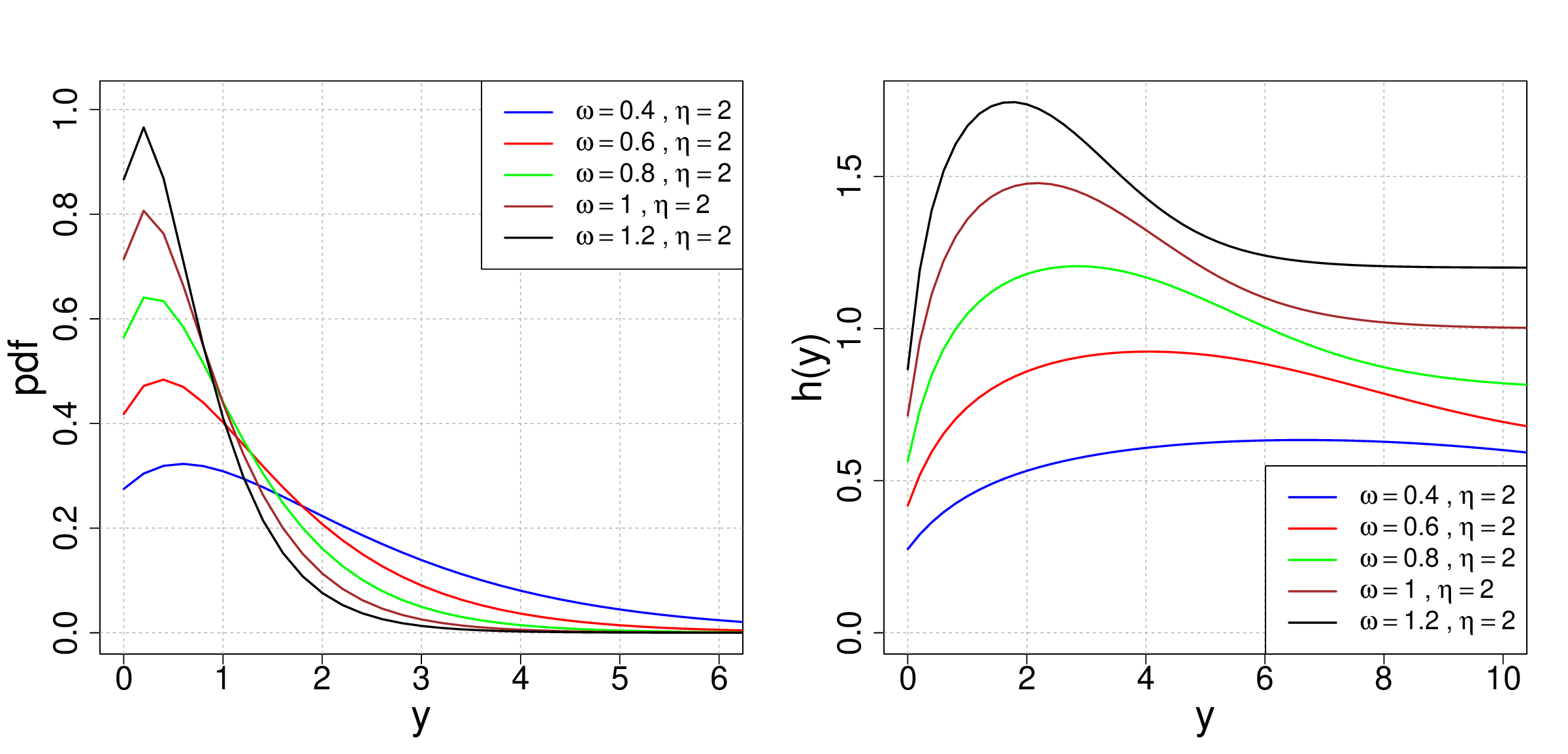}
  \includegraphics[width=14cm, height=4.5cm]{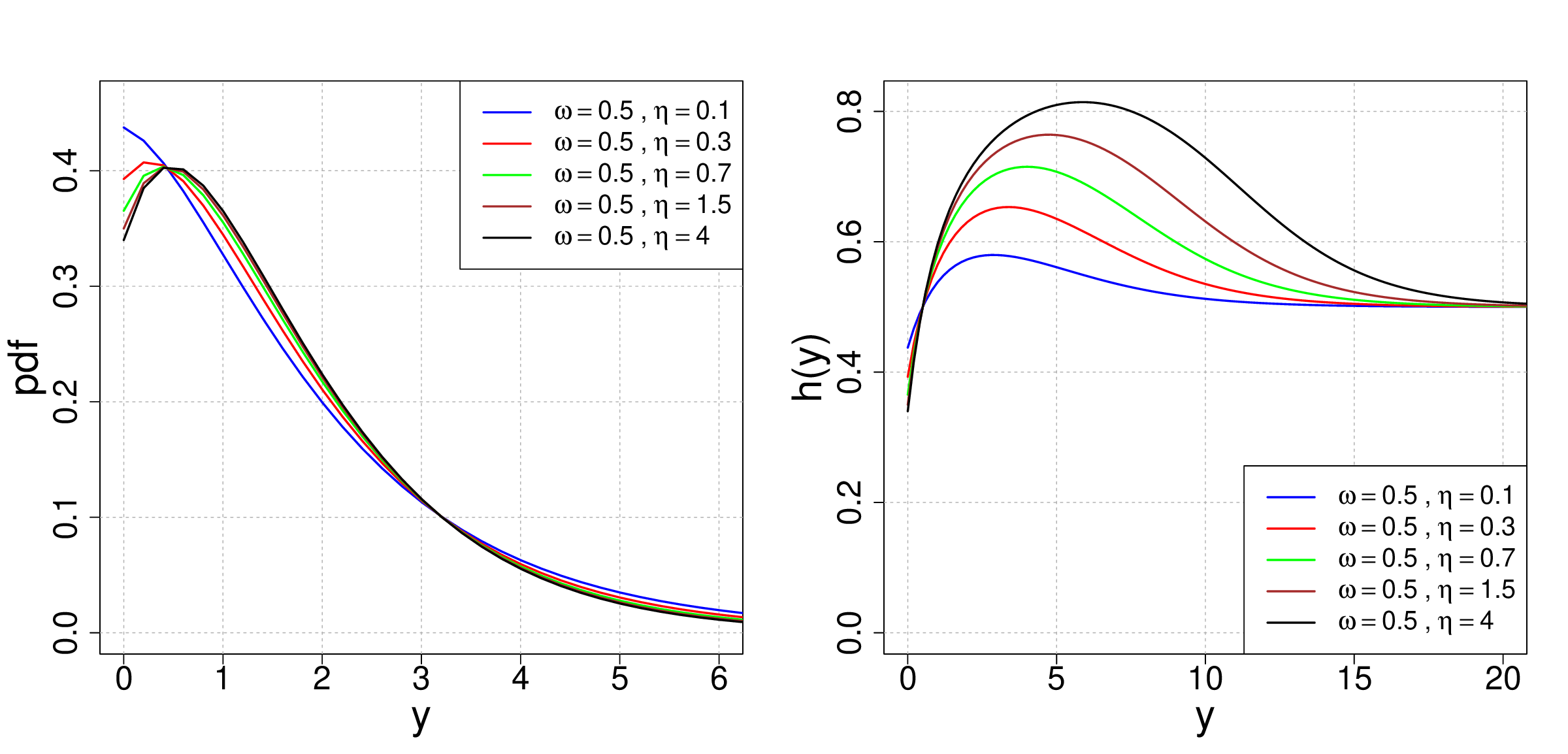}
  \caption{The pdf and hazard functions of the Shiha distribution at different parameter values.}
 \label{fig:pdfhaz}
\end{figure}

Figure \ref{fig:pdfhaz} displays the graphs of the pdf and hazard rate of Shiha distribution for different values of $\omega$ and $\eta$. The figure shows that as $\omega$ increases, the distribution shifts leftward, with the peak becoming sharper and moving toward smaller values of $y$, reflecting a higher concentration of probability near zero. Conversely, increasing $\eta$ flattens the pdf, producing a more gradual tail. The hazard rate exhibits various shapes including increasing, decreasing, unimodal, and upside-down bathtub forms indicating that the distribution can effectively model a wide range of right-skewed lifetime data with either heavy or light tails, making it applicable in engineering and environmental contexts.

\subsection{Stress--Strength Reliability}
The stress--strength reliability parameter is defined as
$R = P(Y_1>Y_2)$, where $Y_1$ (strength) and $Y_2$ (stress) are independent random variables.
The parameter $R$ is widely used in reliability analysis as a measure of system performance.
The system is said to fail whenever the applied stress exceeds its strength, and hence the
failure probability is given by $P(Y_2>Y_1)$.

Let $Y_1$ and $Y_2$ be two independent Shiha random variables with parameters
$(\omega_1, \eta_1)$ and $(\omega_2, \eta_2)$, respectively.
Then the stress--strength reliability parameter $R$ is given by
\[
R =\int_0^{\infty} f(y;\omega_1,\eta_1)\,F(y;\omega_2,\eta_2)\,dy.
\]
Substituting from (\ref{pdf}) and (\ref{cdf}), we obtain
\begin{align*}
R &=1-\frac{\omega_1}{(\omega_1+3\eta_1)(\omega_2+3\eta_2)}
\int_0^\infty
\Big[\omega_1+(2\eta_1+8\omega_1\eta_1 y)e^{-\omega_1 y}\Big]
e^{-(\omega_1+\omega_2)y}  \\
&\qquad\qquad\qquad\qquad\times
\Big[\omega_2+(3\eta_2+4\omega_2\eta_2 y)e^{-\omega_2 y}\Big]
\,dy.
\end{align*}

Evaluating the above integral using
\[
\int_0^\infty y^n e^{-a y}\,dy=\frac{n!}{a^{n+1}}, \quad a>0,
\]
the stress--strength reliability reduces to

\begin{equation}\label{Rfinal}
\begin{aligned}
R &= 1- \frac{\omega_1}{(\omega_1+3\eta_1)(\omega_2+3\eta_2)} \Bigg[\omega_2 \left(\frac{\omega_1}{\omega_1+\omega_2}
+ \frac{2\eta_1}{2\omega_1+\omega_2} + \frac{8\omega_1\eta_1}{(2\omega_1+\omega_2)^2} \right) \\[6pt]
&\qquad\qquad + 3\eta_2 \left( \frac{\omega_1}{\omega_1+2\omega_2} + \frac{\eta_1}{\omega_1+\omega_2} + \frac{2\omega_1\eta_1}{(\omega_1+\omega_2)^2} \right) \\[6pt]
&\qquad\qquad + 4\omega_2\eta_2 \left( \frac{\omega_1}{(\omega_1+2\omega_2)^2} + \frac{\eta_1}{2(\omega_1+\omega_2)^2}+\frac{2\omega_1\eta_1}{(\omega_1+\omega_2)^3}\right)\Bigg].
\end{aligned}
\end{equation}
We now investigate some special cases of \eqref{Rfinal}.
\begin{enumerate}
  \item Setting \(\eta_1 = \eta_2 = 0\), we obtain
  \begin{equation}\label{ssexp}
  R=1- \frac{\omega_1}{\omega_1 \, \omega_2}\left( \frac{\omega_2 \,\omega_1}{\omega_1+\omega_2}\right)=\frac{\omega_2}{{\omega_1+\omega_2}},
\end{equation}
which coincides with the stress--strength reliability of two independent random variables $Y_1 \sim \text{Exp}(\omega_1)$ and $Y_2 \sim \text{Exp}(\omega_2)$.
  \item Setting \(\eta_1 = \eta_2 = \eta\), \(\omega_1 = \omega_2 = \omega\), that is $Y_1\sim Sh(\omega, \eta)$ and $Y_2 \sim Sh(\omega, \eta)$, we get
\begin{equation*}
\begin{aligned}
R &= 1 - \frac{\omega}{(\omega+3\eta)^2} \left( \frac{\omega}{2} + \frac{2\eta}{3} + \frac{8\eta}{9} + \eta + \frac{3\eta^2}{\omega} + \frac{4\eta}{9} + \frac{3\eta^2}{2\omega} \right)\\
&=1-\frac{\omega}{(\omega+3\eta)^2}\left( \frac{\omega}{2} + 3\eta +\frac{9\eta^2}{2\omega}\right)=\frac{1}{2}.
\end{aligned}
\end{equation*}

\item Setting \(\omega_1 = \omega_2 = \omega\), \(\eta_2 = 0\) i.e., $Y_1 \sim Sh(\omega, \eta)$, $Y_2 \sim \text{Exp}(\omega)$, then
\[
R = 1-\frac{1}{\omega+3\eta} \left[ \frac{\omega}{2} +\frac{2\eta}{3} + \frac{8\eta}{9} \right]=\frac{9 \omega+26 \eta}{18(\omega+3\eta)}.
\]

\item If $\omega_1 \to \infty$, the random variable $Y_1$ becomes highly concentrated near zero, and

\[
R \to 1 - \frac{1}{\omega_2+3\eta_2} (\omega_2 \cdot 1 + 3\eta_2 \cdot 1 + 0)
= 1 - \frac{\omega_2+3\eta_2}{\omega_2+3\eta_2} = 0.
\]
\end{enumerate}
\section{Statistical Properties}
In this section, several structural and statistical properties of the proposed Shiha model are investigated, including the quantiles, moment generating function, the first four moments, variance, skewness, kurtosis, and entropy.
\subsection{The Quantile Function}
For a probability $0<p<1$ the $p$-th quantile $y_p$ is defined by
\begin{equation}\label{eq:quantile}
F(y_p;\omega,\eta)=p.
\end{equation}
From \eqref{cdf} and \eqref{eq:quantile} we obtain
\[
1-p=\frac{1}{\omega+3\eta}\left[\omega + \big(3\eta + 4\omega\eta y_p\big)e^{-\omega y_p}\right] e^{-\omega y_p},
\]
which can be rearranged to
\begin{equation}\label{eq:final}
\omega e^{-\omega y_p} + 3\eta e^{-2\omega y_p} + 4\omega\eta y_p e^{-2\omega y_p} - (1-p)(\omega+3\eta) = 0.
\end{equation}
The substitution \(x=e^{-\omega y_p}\) (so \(y_p=-\frac{1}{\omega}\ln x\)) transforms \eqref{eq:final} into
\begin{equation}\label{eq:fquantile}
\omega x + 3\eta x^2 - 4\eta x^2\ln x = (1-p)(\omega+3\eta),\qquad 0<x<1.
\end{equation}

Equation \eqref{eq:fquantile} is non-algebraic  due to the mixture of polynomial and logarithmic terms, so no closed form solution exists for $x$ (and hence for $y_p$).  The quantiles were obtained numerically by solving $ F(y;\omega,\eta)-p=0$ using R program software. Since the F($y;\omega,\eta)$ is continuous and strictly increasing, a unique quantile $y_p$ exists for each $p \in (0,1)$.
Table~\ref{tab:quantiles} presents selected quantiles for different combinations of $\omega,\eta$, while Figure~\ref{fig:qua} illustrates their variation across probability levels.

\begin{table}[h!]
\centering
\caption{Quantile values for different parameter values.}
\label{tab:quantiles}
\begin{tabular}{|c|c|c|c|c|}
\hline
Probability & $\boldsymbol{\omega=0.4, \eta=0.2}$ & $\boldsymbol{\omega=0.4, \eta=3}$ & $\boldsymbol{\omega=1.2, \eta=0.2}$ & $\boldsymbol{\omega=1.2, \eta=3}$ \\
\hline
0.01 & 0.0311 & 0.0363 & 0.0094 & 0.0117 \\
0.05 & 0.1544 & 0.1741 & 0.0473 & 0.0566 \\
0.10 & 0.3069 & 0.3363 & 0.0955 & 0.1100 \\
0.25 & 0.7736 & 0.8036 & 0.2500 & 0.2658 \\
0.40&   1.2886& 1.2943&  0.4279 &0.4310\\
0.42 &  1.3635& 1.3643 & 0.4543& 0.4547\\
0.43  & 1.4016& 1.3998 & 0.4677 &0.4667\\
0.45 &  1.4795 &1.4722 & 0.4952& 0.4912\\
0.50 & 1.6841 & 1.6611 & 0.5680 & 0.5552 \\
0.60  & 2.1495 &2.08456 & 0.7355& 0.6991\\
0.75 & 3.0927 & 2.9203 & 1.0815 & 0.9846 \\
0.90 & 4.8874 & 4.4343 & 1.7547 & 1.5079 \\
0.95 & 6.2642 & 5.5294 & 2.2759 & 1.8925 \\
0.99 & 9.6642 & 8.0174 & 3.5386 & 2.7954 \\
\hline
\end{tabular}
\end{table}

\begin{figure}[H]
  \centering
 \includegraphics[width=14cm, height=5cm]{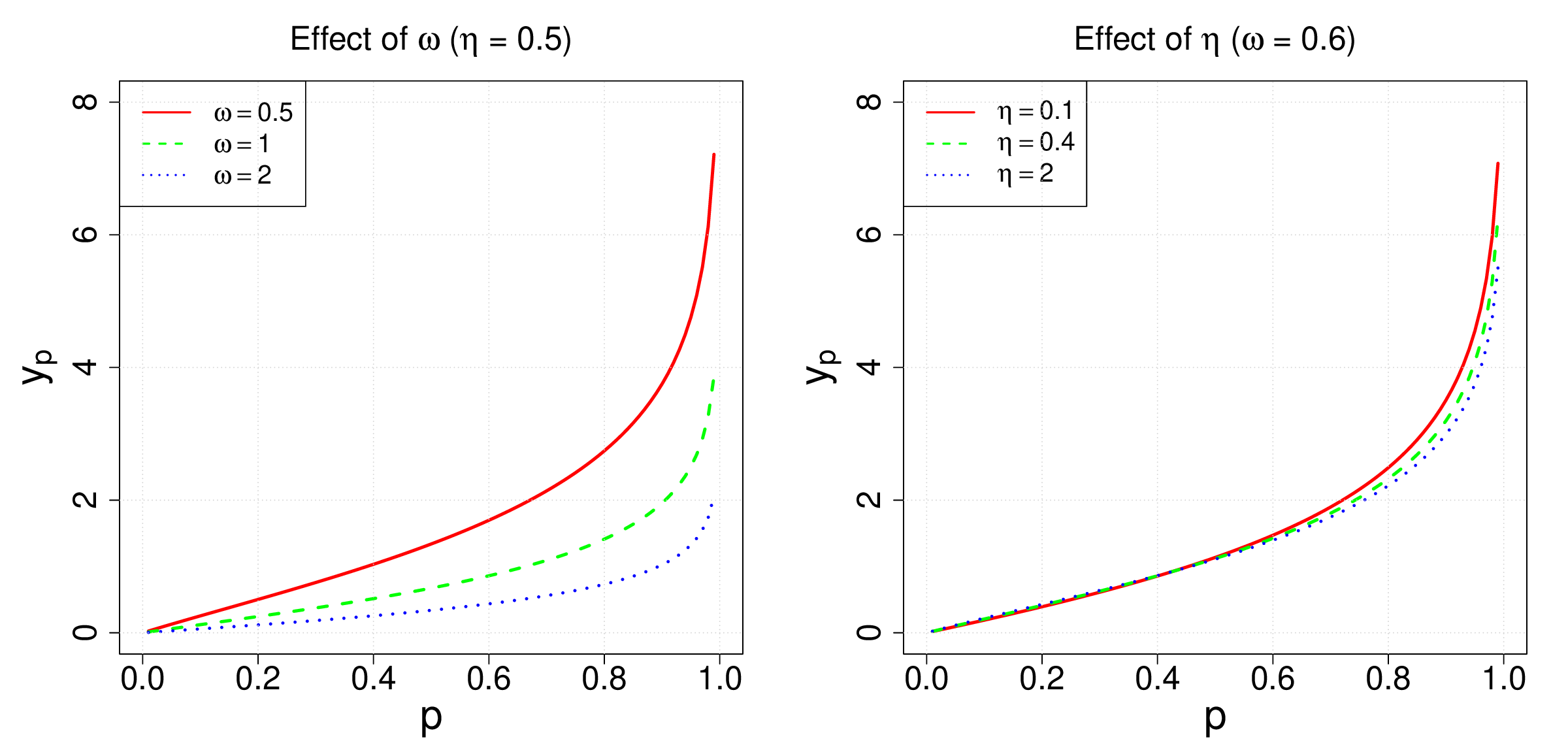}
  \caption{The quantiles of Shiha distribution.}
 \label{fig:qua}
\end{figure}
From Table~\ref{tab:quantiles} and Figure~\ref{fig:qua}, it can be seen that for a fixed $\omega$, increasing $\eta$ slightly raises the quantiles for lower probabilities ($p\in(0,0.43]$), while they produce smaller quantiles for higher probabilities ($p\in(0.43,1)$).
This indicates that $\eta$ has a non-monotonic effect on the quantile function: it increases the lower part and compresses the upper tail of the distribution.
Conversely, higher $\omega$ values reduce all quantiles, confirming its role as a scale parameter controlling the overall magnitude of the distribution.
To further illustrate the parameter effects, Figure~\ref{fig:qur} shows the quartiles trends.
\begin{figure}[H]
  \centering
  \includegraphics[width=14cm, height=4.5cm]{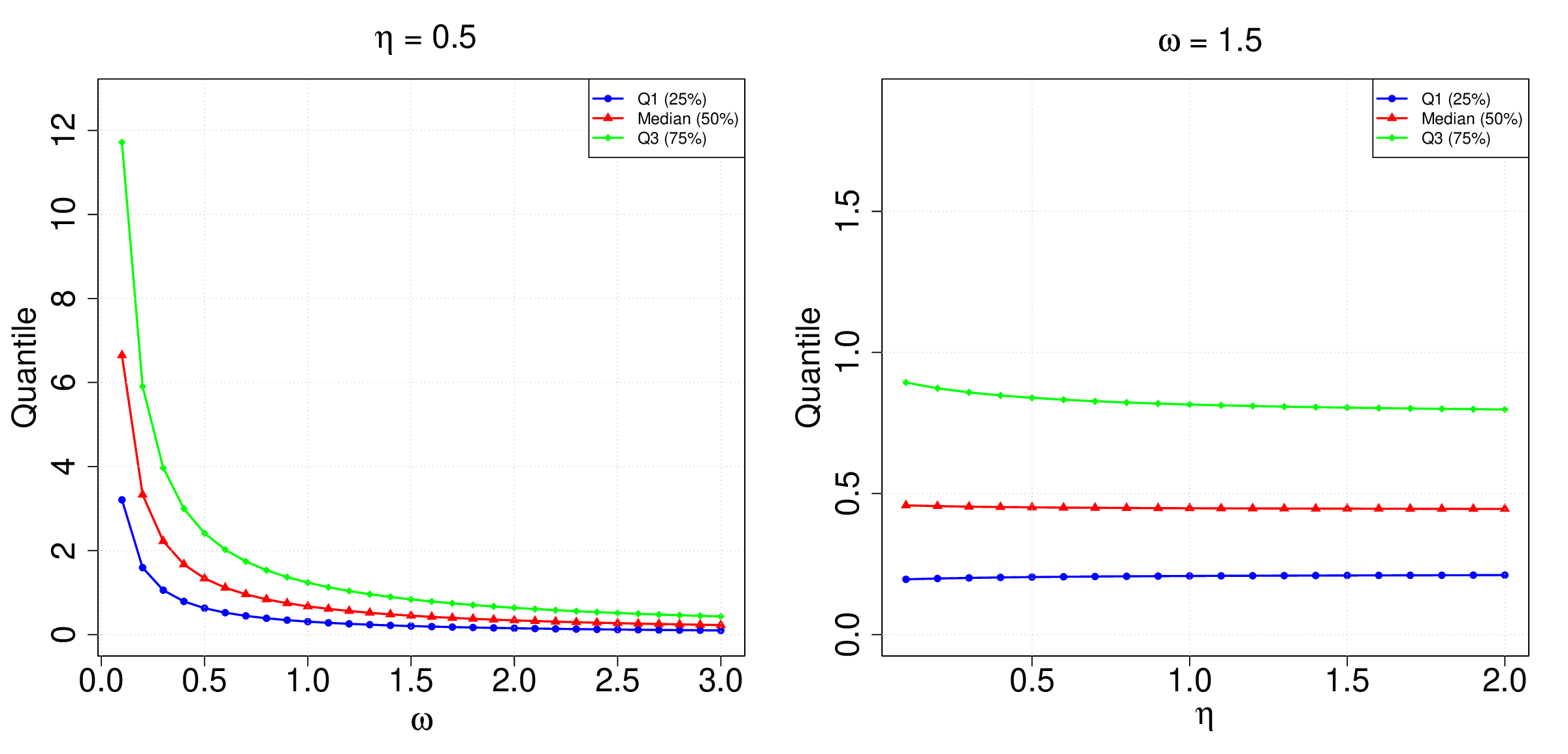}
  \caption{The quartiles of the Shiha distribution.}
 \label{fig:qur}
\end{figure}

As $\omega$ increases, all quartiles decline and the interquartile range $Q_3 - Q_1$ narrowers, indicating reduced dispersion and a leftward shift of the distribution toward smaller values. Meanwhile, variations in $\eta$ have only a slight effect on the quartiles, suggesting that $\eta$ primarily affects the tails rather than the central spread.

\subsection{Moment Generating Function}
The moment generating function (MGF) captures all moments of a distribution, and the derived skewness and kurtosis describe its asymmetry and tail characteristics.

\begin{theorem}
The moment generating function of a Shiha distributed random variable is given by

\begin{equation}\label{eq:mgf}
M_Y(t) = \frac{4\omega^{4} + 12\omega^{3}\eta - 4\omega^{3}t - 14\omega^{2}\eta t + \omega^{2}t^{2} + 2\omega\eta t^{2}}
{(\omega + 3\eta)(\omega - t)(2\omega - t)^{2}}, \qquad t < \omega.
\end{equation}
\end{theorem}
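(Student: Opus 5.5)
The plan is to use the mixture representation \eqref{mix} rather than integrating the density \eqref{pdf} directly. The MGF of a finite mixture is the same mixture of the component MGFs. For $t<\omega$ the three components have
\[
M_{\mathrm{Exp}(\omega)}(t)=\frac{\omega}{\omega-t},\qquad
M_{\mathrm{Exp}(2\omega)}(t)=\frac{2\omega}{2\omega-t},\qquad
M_{\mathrm{Gamma}(2,2\omega)}(t)=\Big(\frac{2\omega}{2\omega-t}\Big)^{2}.
\]
Each of these follows from the single integral $\int_0^\infty y^n e^{-(a-t)y}\,dy=n!/(a-t)^{n+1}$, already used in the stress--strength computation. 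The restriction $t<\omega$ is the binding one, since it also guarantees $t<2\omega$.

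Weighting by $p_1=\omega/(\omega+3\eta)$, $p_2=\eta/(\omega+3\eta)$ and $p_3=2\eta/(\omega+3\eta)$ gives
\[
M_Y(t)=\frac{1}{\omega+3\eta}\left[\frac{\omega^2}{\omega-t}+\frac{2\omega\eta}{2\omega-t}+\frac{8\omega^2\eta}{(2\omega-t)^2}\right].
\]
Before expanding, I will check that $(2\eta+8\omega\eta y)e^{-2\omega y}\cdot\omega/(\omega+3\eta)$ really equals $p_2\cdot 2\omega e^{-2\omega y}+p_3\cdot 4\omega^2 y e^{-2\omega y}$. It does: $2\omega\eta$ and $8\omega^2\eta$ appear on both sides.

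Next I put everything over the common denominator $(\omega+3\eta)(\omega-t)(2\omega-t)^2$. The numerator is
\[
\omega^2(2\omega-t)^2+2\omega\eta(\omega-t)(2\omega-t)+8\omega^2\eta(\omega-t).
\]
I will expand this and collect it by powers of $t$:
\begin{itemize}
\item the constant term is $4\omega^4+4\omega^3\eta+8\omega^3\eta=4\omega^4+12\omega^3\eta$;
\item the linear term is $-4\omega^3t-6\omega^2\eta t-8\omega^2\eta t=-4\omega^3t-14\omega^2\eta t$;
\item the quadratic term is $\omega^2t^2+2\omega\eta t^2$.
\end{itemize}
This matches \eqref{eq:mgf}.

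As a sanity check I will confirm $M_Y(0)=1$. At $t=0$ the numerator is $4\omega^3(\omega+3\eta)$ and the denominator is $(\omega+3\eta)\cdot\omega\cdot4\omega^2$, so the ratio is $1$.

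The only real obstacle is bookkeeping in the polynomial expansion. The mixture route keeps that short, so I do not expect a conceptual difficulty.
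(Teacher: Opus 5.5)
Your proposal is correct and is essentially the paper's argument. The paper integrates $e^{ty}f(y;\omega,\eta)$ term by term to reach the same three-term expression $\frac{\omega}{\omega+3\eta}\bigl[\frac{\omega}{\omega-t}+\frac{2\eta}{2\omega-t}+\frac{8\omega\eta}{(2\omega-t)^2}\bigr]$, which your weighting of the component MGFs reproduces. Your expansion over the common denominator and the $M_Y(0)=1$ check supply the algebra that the paper skips with ``which is equivalent to.''
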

\begin{proof}
By definition,
\[
M_Y(t)=E[e^{tY}]=\int_{0}^{\infty} e^{ty} f(y;\omega,\eta)\,dy.
\]

Substituting $f(y;\omega,\eta)$ from \eqref{pdf} and simplifying the exponentials gives
\[
\begin{aligned}
M_Y(t)
&=\frac{\omega}{\omega+3\eta}\int_{0}^{\infty} e^{ty}\Big[\omega + (2\eta+8\omega\eta y)e^{-\omega y}\Big] e^{-\omega y}\,dy\\[4pt]
&=\frac{\omega}{\omega+3\eta}\int_{0}^{\infty}\Big[\omega e^{(t-\omega)y} + (2\eta+8\omega\eta y)e^{(t-2\omega)y}\Big]\,dy \\[4pt]
&=\frac{\omega}{\omega+3\eta} \left[ \omega\int_{0}^{\infty} e^{(t-\omega)y}\,dy+2\eta\int_{0}^{\infty} e^{(t-2\omega)y}\,dy+8\omega\eta\int_{0}^{\infty} y\,e^{(t-2\omega)y}\,dy \right] \\
&= \frac{\omega}{\omega+3\eta} \left[ \frac{\omega}{\omega-t}+\frac{2\eta}{2\omega-t}+\frac{8\omega\eta}{(2\omega-t)^2}\right] ,\qquad t<\omega,
\end{aligned}
\]
which is equivalent to (\ref{eq:mgf}).
\end{proof}
The $k$th moment about the origin, $\mu_{k}' = E(Y^{k})$, is obtained by taking the $k$th derivative of $M_Y(t)$ with respect to $t$ and then substituting $t = 0$.
\begin{equation}
\mu_{k}' = \frac{\left( 2^{k}\omega + 2\eta k + 3\eta \right) \Gamma(k+1)}
{2^{k} \omega^{k} (\omega + 3\eta)}, \qquad k = 1, 2, 3, \ldots
\end{equation}
By setting $k = 1, 2, 3,$ and $4$, we obtain the first four moments of the Shiha distribution.
\[
\begin{aligned}
\mu'_1 &= \frac{2\omega+5\eta}{2\omega(\omega+3\eta)},  & \mu'_2 &= \frac{4\omega+7\eta}{2\omega^2(\omega+3\eta)},
 & \mu'_3 & = \frac{3(8\omega+9\eta)}{4\omega^3(\omega+3\eta)}, &  \mu'_4 &= \frac{3(16\omega+11\eta)}{2\omega^4(\omega+3\eta)}.
\end{aligned}
\]
The numerical values of the first four moments of Shiha distribution for several values of $\eta$ and $\omega$ are shown in Table~\ref{T:mom}. It can be seen that the moments decrease as the parameters increase.

\begin{table}[H]
\centering
\caption{The first four moments of the Shiha distribution for different values of the model parameters.}\label{T:mom}
\adjustbox{max width=\textwidth, center}{\footnotesize
\begin{tabular}{c|cccc|cccc|cccc}\hline
\multirow{2}{*}{$\omega$}
& \multicolumn{4}{c|}{$\eta=0.2$}
& \multicolumn{4}{c|}{$\eta=0.6$}
& \multicolumn{4}{c}{$\eta=1$} \\ \cline{2-13}
 & $\mu'_1$ & $\mu'_2$ & $\mu'_3$ & $\mu'_4$ & $\mu'_1$ & $\mu'_2$ & $\mu'_3$ & $\mu'_4$ & $\mu'_1$ & $\mu'_2$ & $\mu'_3$ & $\mu'_4$ \\\hline

 0.2 & 4.375 & 34.375 & 398.437 & 6328.125 & 4.25 & 31.25 & 328.125 & 4593.75 & 4.219 & 30.469 & 310.547 & 4160.156 \\
0.4 & 2.25 & 9.375 & 58.594 & 503.906 & 2.159 & 8.239 & 45.81 & 346.236 & 2.132 & 7.904 & 42.05 & 299.862 \\
0.6 & 1.528 & 4.398 & 19.097 & 113.812 & 1.458 & 3.819 & 14.757 & 78.125 & 1.435 & 3.627 & 13.31 & 66.229 \\
0.8 & 1.161 & 2.567 & 8.58 & 39.237 & 1.106 & 2.224 & 6.648 & 27.325 & 1.086 & 2.097 & 5.936 & 22.936 \\
1 & 0.938 & 1.688 & 4.594 & 17.062 & 0.893 & 1.464 & 3.589 & 12.107 & 0.875 & 1.375 & 3.188 & 10.125 \\
1.2 & 0.787 & 1.196 & 2.749 & 8.6 & 0.75 & 1.042 & 2.17 & 6.221 & 0.734 & 0.976 & 1.922 & 5.201 \\
1.4 & 0.679 & 0.893 & 1.777 & 4.803 & 0.647 & 0.781 & 1.418 & 3.539 & 0.633 & 0.731 & 1.255 & 2.964 \\
1.6 & 0.597 & 0.692 & 1.215 & 2.892 & 0.57 & 0.609 & 0.98 & 2.168 & 0.557 & 0.569 & 0.868 & 1.821 \\
1.8 & 0.532 & 0.553 & 0.868 & 1.846 & 0.509 & 0.489 & 0.707 & 1.405 & 0.498 & 0.457 & 0.627 & 1.185 \\
2 & 0.481 & 0.452 & 0.642 & 1.233 & 0.461 & 0.401 & 0.528 & 0.952 & 0.45 & 0.375 & 0.469 & 0.806 \\
2.2 & 0.438 & 0.376 & 0.488 & 0.855 & 0.42 & 0.336 & 0.405 & 0.669 & 0.411 & 0.314 & 0.36 & 0.569 \\
2.4 & 0.403 & 0.318 & 0.38 & 0.612 & 0.387 & 0.285 & 0.318 & 0.484 & 0.378 & 0.267 & 0.283 & 0.414 \\
2.6 & 0.373 & 0.273 & 0.301 & 0.449 & 0.358 & 0.245 & 0.254 & 0.36 & 0.35 & 0.23 & 0.227 & 0.308 \\
2.8 & 0.347 & 0.236 & 0.243 & 0.337 & 0.334 & 0.214 & 0.206 & 0.273 & 0.326 & 0.2 & 0.185 & 0.235 \\
3 & 0.324 & 0.207 & 0.199 & 0.258 & 0.312 & 0.187 & 0.17 & 0.211 & 0.306 & 0.176 & 0.153 & 0.182 \\
3.2 & 0.304 & 0.182 & 0.165 & 0.201 & 0.294 & 0.166 & 0.142 & 0.165 & 0.287 & 0.156 & 0.128 & 0.144 \\
3.4 & 0.287 & 0.162 & 0.138 & 0.159 & 0.277 & 0.148 & 0.12 & 0.132 & 0.271 & 0.139 & 0.108 & 0.115 \\
3.6 & 0.271 & 0.145 & 0.117 & 0.127 & 0.262 & 0.133 & 0.102 & 0.106 & 0.257 & 0.125 & 0.092 & 0.093 \\
3.8 & 0.257 & 0.131 & 0.1 & 0.103 & 0.249 & 0.12 & 0.087 & 0.087 & 0.244 & 0.113 & 0.079 & 0.076 \\
4 & 0.245 & 0.118 & 0.086 & 0.084 & 0.237 & 0.109 & 0.076 & 0.071 & 0.232 & 0.103 & 0.069 & 0.063 \\ \hline

\end{tabular}}
\end{table}
Skewness and kurtosis are important measures that describe the shape and spread of a probability distribution. They provide useful information about asymmetry, peakedness, and possible outliers in the data. Using the first four moments of the Shiha distribution, the skewness $Sk(Y)$ and kurtosis $Ku(Y)$ of the Shiha distribution are derived as:

The variance of Shiha distribution is
\[
\text{Var}(Y)=\mu'_2 -( \mu'_1)^2=\frac{17\eta^{2}+18\eta\omega+4\omega^{2}}{4\omega^{2}(\omega+3\eta)^{2}}.
\]
The skewness and kurtosis are
\[
\text{Sk(Y)}=\frac{\mu'_3 - 3\mu'_2 \, \mu'_1 + 2(\mu'_1)^3}{{(\text{Var}(Y))}^{3/2}}
=\frac{106\eta^{3}+234\eta^{2}\omega+114\eta\omega^{2}+16\omega^{3}}
{\bigl(17\eta^{2}+18\eta\omega+4\omega^{2}\bigr)^{3/2}},
\]
\[
\text{Ku(Y)}=\frac{\mu'_4 - 4\mu'_3 \, \mu'_1 + 6\mu'_2 \, (\mu'_1)^2 - 3{\mu'_1}^4}{{(\mathrm{Var}(Y))}^{2}}
=\frac{3\bigl(611\eta^{4}+2076\eta^{3}\omega+1608\eta^{2}\omega^{2}+472\eta\omega^{3}+48\omega^{4}\bigr)}
{\bigl(17\eta^{2}+18\eta\omega+4\omega^{2}\bigr)^{2}},
\]
\[
\text{Excess kurtosis}=\text{Ku(Y)}-3
=\frac{966\eta^{4}+4392\eta^{3}\omega+3444\eta^{2}\omega^{2}+984\eta\omega^{3}+96\omega^{4}}
{\bigl(17\eta^{2}+18\eta\omega+4\omega^{2}\bigr)^{2}}.
\]
We note that
\[
\lim_{\omega\to\infty}\mathrm{Sk}(Y)=2,\qquad
\lim_{\omega\to\infty}\mathrm{Ku}(Y)=9.
\]
Table~\ref{T:mom2} presents the computed values of $\sigma^2$, $Sk(Y)$, and $Ku(Y)$. In addition, Figure~\ref{fig:sku} displays the variation of $Sk(Y)$ and $Ku(Y)$ with different values of $\omega$ and $\eta$.

\begin{table}[!ht]
\centering
\caption{Variance, skewness, and kurtosis for different values of $\omega$ and $\eta$.}\label{T:mom2}
\adjustbox{max width=\textwidth, center}{
\begin{tabular}{c|ccc|ccc|ccc|ccc}
\hline
\multirow{2}{*}{$\omega$}
& \multicolumn{3}{c|}{$\eta=0.2$}
& \multicolumn{3}{c|}{$\eta=0.6$}
& \multicolumn{3}{c|}{$\eta=1$}
& \multicolumn{3}{c}{$\eta=1.5$} \\ \cline{2-13}
& $\sigma^2$ & Sk($Y$) & Ku($Y$)
& $\sigma^2$ & Sk($Y$) & Ku($Y$)
& $\sigma^2$ & Sk($Y$) & Ku($Y$)
& $\sigma^2$ & Sk($Y$) & Ku($Y$) \\ \hline
0.2 & 15.234 & 1.93 & 9.497 & 13.187 & 1.738 & 8.186 & 12.671 & 1.665 & 7.618 & 12.395 & 1.621 & 7.26 \\
0.4 & 4.312 & 2.02 & 9.917 & 3.577 & 1.859 & 9.054 & 3.357 & 1.768 & 8.412 & 3.234 & 1.704 & 7.923 \\
0.6 & 2.064 & 2.047 & 9.942 & 1.693 & 1.93 & 9.497 & 1.567 & 1.84 & 8.925 & 1.492 & 1.768 & 8.412 \\
0.8 & 1.22 & 2.055 & 9.886 & 1.001 & 1.974 & 9.731 & 0.919 & 1.892 & 9.266 & 0.868 & 1.818 & 8.777 \\
1 & 0.809 & 2.057 & 9.815 & 0.667 & 2.002 & 9.855 & 0.609 & 1.93 & 9.497 & 0.572 & 1.859 & 9.054 \\
1.2 & 0.577 & 2.056 & 9.748 & 0.479 & 2.02 & 9.917 & 0.437 & 1.958 & 9.655 & 0.408 & 1.892 & 9.266 \\
1.4 & 0.432 & 2.054 & 9.687 & 0.362 & 2.033 & 9.944 & 0.33 & 1.98 & 9.763 & 0.307 & 1.918 & 9.43 \\
1.6 & 0.337 & 2.051 & 9.634 & 0.284 & 2.041 & 9.95 & 0.259 & 1.997 & 9.836 & 0.241 & 1.94 & 9.556 \\
1.8 & 0.27 & 2.049 & 9.587 & 0.229 & 2.047 & 9.942 & 0.209 & 2.01 & 9.886 & 0.194 & 1.958 & 9.655 \\
2 & 0.221 & 2.046 & 9.547 & 0.189 & 2.051 & 9.927 & 0.172 & 2.02 & 9.917 & 0.16 & 1.974 & 9.731 \\
2.2 & 0.184 & 2.044 & 9.511 & 0.159 & 2.054 & 9.908 & 0.145 & 2.029 & 9.937 & 0.135 & 1.986 & 9.791 \\
2.4 & 0.156 & 2.042 & 9.479 & 0.136 & 2.055 & 9.886 & 0.124 & 2.035 & 9.947 & 0.115 & 1.997 & 9.836 \\
2.6 & 0.134 & 2.04 & 9.451 & 0.117 & 2.056 & 9.863 & 0.107 & 2.04 & 9.95 & 0.099 & 2.006 & 9.871 \\
2.8 & 0.116 & 2.038 & 9.426 & 0.102 & 2.057 & 9.839 & 0.094 & 2.044 & 9.948 & 0.087 & 2.014 & 9.898 \\
3 & 0.102 & 2.036 & 9.404 & 0.09 & 2.057 & 9.815 & 0.083 & 2.047 & 9.942 & 0.077 & 2.02 & 9.917 \\
3.2 & 0.09 & 2.034 & 9.384 & 0.08 & 2.057 & 9.792 & 0.073 & 2.05 & 9.934 & 0.068 & 2.026 & 9.931 \\
3.4 & 0.08 & 2.033 & 9.365 & 0.071 & 2.056 & 9.769 & 0.066 & 2.052 & 9.924 & 0.061 & 2.031 & 9.941 \\
3.6 & 0.072 & 2.032 & 9.348 & 0.064 & 2.056 & 9.748 & 0.059 & 2.053 & 9.912 & 0.055 & 2.035 & 9.947 \\
3.8 & 0.064 & 2.03 & 9.333 & 0.058 & 2.055 & 9.727 & 0.054 & 2.055 & 9.899 & 0.05 & 2.038 & 9.949 \\
4 & 0.058 & 2.029 & 9.319 & 0.053 & 2.054 & 9.706 & 0.049 & 2.055 & 9.886 & 0.045 & 2.041 & 9.95 \\
4.2 & 0.053 & 2.028 & 9.306 & 0.048 & 2.054 & 9.687 & 0.045 & 2.056 & 9.872 & 0.042 & 2.044 & 9.948 \\
4.4 & 0.049 & 2.027 & 9.294 & 0.044 & 2.053 & 9.668 & 0.041 & 2.057 & 9.858 & 0.038 & 2.046 & 9.944 \\
4.6 & 0.045 & 2.026 & 9.283 & 0.041 & 2.052 & 9.651 & 0.038 & 2.057 & 9.844 & 0.035 & 2.048 & 9.94 \\
4.8 & 0.041 & 2.025 & 9.273 & 0.037 & 2.051 & 9.634 & 0.035 & 2.057 & 9.829 & 0.033 & 2.05 & 9.934 \\
5 & 0.038 & 2.025 & 9.264 & 0.035 & 2.05 & 9.618 & 0.032 & 2.057 & 9.815 & 0.03 & 2.051 & 9.927 \\ \hline

\hline
\end{tabular}}
\end{table}

Table~\ref{T:mom2} and Figure~\ref{fig:sku} illustrate that both skewness and kurtosis are positive, confirming the right-skewed and heavy-tailed nature of the Shiha distribution. As $\omega$ increases, both measures initially rise, reach a peak, and then decline, with a sharper decline for smaller values of $\eta$. Furthermore, as $\eta$ increases, skewness and kurtosis generally decrease at lower $\omega$ values, but tend to increase slightly again for larger $\omega$. This indicates that $\eta$ acts as a stabilizing parameter at small $\omega$, while for large $\omega$ it amplifies the tail thickness and asymmetry to some extent, reflecting the complex interaction between $\eta$ and $\omega$ in shaping the distribution.
\begin{figure}[H]
  \centering
 \includegraphics[width=14cm, height=4.5cm]{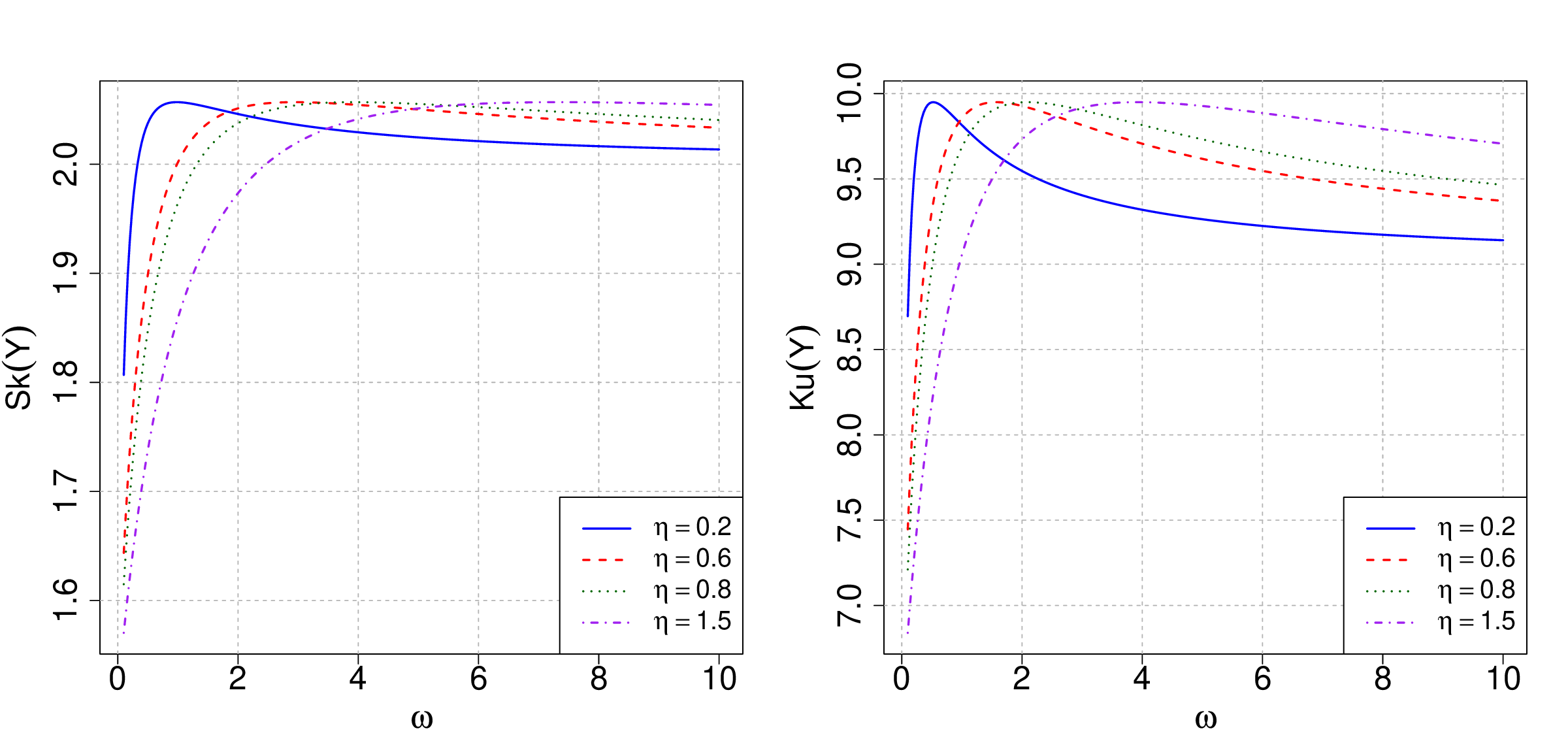}
  \caption{The skewness and kurtosis measures of the Shiha distribution.}
 \label{fig:sku}
\end{figure}

\subsection{Entropy}
 Entropy serves as an important measure of uncertainty and has wide applications in statistics, physics, and machine learning. For a continuous random variable $Y$ with a pdf $f(y) $, the Shannon entropy (differential entropy) is defined as $ H(y)=E[-\ln f(y)]$.

\begin{theorem}
The Shannon entropy of Shiha distribution is
\begin{equation}\label{fentropy}
  H(y)
  = \ln\!\frac{\omega+3\eta}{\omega}
  + \frac{2\omega+5\eta}{2(\omega+3\eta)}
  - E\!\left[\ln\!\big(\omega+(2\eta+8\omega\eta Y)e^{-\omega Y}\big)\right].
\end{equation}
\end{theorem}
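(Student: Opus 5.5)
The plan is to start from the definition $H(y)=E[-\ln f(Y;\omega,\eta)]$ and take the logarithm of the density \eqref{pdf} directly. Writing
\[
\ln f(Y;\omega,\eta)=\ln\frac{\omega}{\omega+3\eta}+\ln\!\big(\omega+(2\eta+8\omega\eta Y)e^{-\omega Y}\big)-\omega Y,
\]
the product structure of \eqref{pdf} makes $\ln f$ split additively into a constant, a nonlinear term in $Y$, and a linear term in $Y$. Taking expectations term by term then gives three pieces.

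\emph{First piece.} The constant contributes $-\ln\frac{\omega}{\omega+3\eta}=\ln\frac{\omega+3\eta}{\omega}$.

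\emph{Second piece.} The linear term contributes $\omega E(Y)=\omega\mu_1'$. By the first moment already derived from the moment generating function, $\mu_1'=\frac{2\omega+5\eta}{2\omega(\omega+3\eta)}$, so this piece equals $\frac{2\omega+5\eta}{2(\omega+3\eta)}$.

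\emph{Third piece.} The middle term is left as $-E\big[\ln(\omega+(2\eta+8\omega\eta Y)e^{-\omega Y})\big]$, exactly as it appears in \eqref{fentropy}.

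Adding the three pieces reproduces \eqref{fentropy}.

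The only real issue is justifying linearity of expectation, which requires each expectation to be finite. The term $E(Y)$ is finite by the moment formula. For the logarithmic term, note that $\omega<\omega+(2\eta+8\omega\eta y)e^{-\omega y}\le \omega+C$ for some constant $C$, because $(2\eta+8\omega\eta y)e^{-\omega y}$ is bounded on $[0,\infty)$. The integrand is therefore bounded between $\ln\omega$ and $\ln(\omega+C)$, so its expectation exists.

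I do not expect a closed form for that last expectation, since it mixes polynomial, exponential and logarithmic terms much like the quantile equation \eqref{eq:fquantile}. The statement deliberately leaves it as an expectation, to be evaluated numerically, and the proof would end there.
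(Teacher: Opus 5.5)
Your proposal is correct and follows the paper's proof essentially verbatim: you expand $-\ln f$ into the constant $\ln\frac{\omega+3\eta}{\omega}$, the logarithmic term, and $\omega Y$, then take expectations using $E(Y)=\mu_1'=\frac{2\omega+5\eta}{2\omega(\omega+3\eta)}$. Your added observation that $(2\eta+8\omega\eta y)e^{-\omega y}$ is bounded on $[0,\infty)$, so the logarithmic term has finite expectation and linearity of expectation applies, is a small gain in rigor that the paper omits.
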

\begin{proof}
Let $Y$ be a continuous random variable having a pdf $f(y;\omega,\eta)$. The Shannon entropy of
$Y$ is given by
\begin{equation} \label{entropy}
H(y)=E[-\ln f(y;\omega,\eta)]=- \int_{0}^{\infty} f(y;\omega,\eta)\, \ln f(y;\omega,\eta)\, dy.
\end{equation}
Using the pdf of Shiha distribution, given in (\ref{pdf}), we get

\begin{equation}\label{lnf}
  -\ln f(y;\omega,\eta)
  = -\ln\omega + \ln(\omega+3\eta)
  - \ln\!\big[\omega+(2\eta+8\omega\eta y)e^{-\omega y}\big]
  + \omega y.
\end{equation}
Taking the expectation,
\begin{equation}
  H(y)= -\ln\omega + \ln(\omega+3\eta)
   - E\!\left[\ln\!\big(\omega+(2\eta+8\omega\eta Y)e^{-\omega Y}\big)\right]
   + \omega\,E[Y].
\end{equation}

Given that $E[Y]=\mu'_1=\frac{2\omega+5\eta}{2\omega(\omega+3\eta)}$, we obtain (\ref{fentropy}).
\end{proof}
The expected values in  (\ref{fentropy}) cannot be obtained analytically, $H(y)$ is computed using numerical method.
R programming can be used to compute the entropy for different values of the model parameters, as shown in Figure \ref{fig:entropy}.

\begin{figure}[htbp]
  \centering
 \includegraphics[width=14cm, height=5cm]{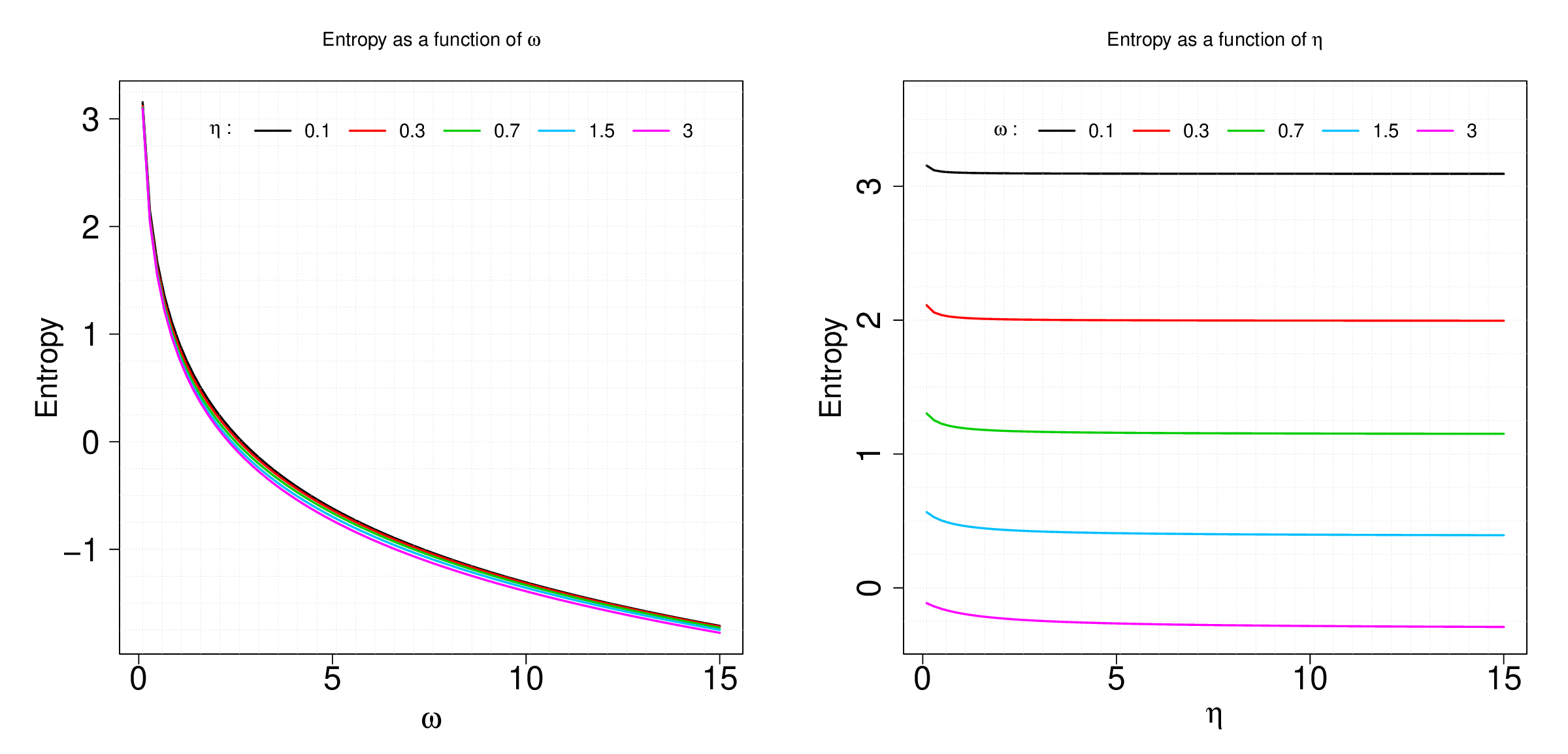}
  \caption{Entropy of the Shiha distribution.}
 \label{fig:entropy}
\end{figure}
For Shiha distribution, the entropy decreases as $\omega$ increases, indicating that larger $\omega$ values make the distribution more concentrated (more peaked and less dispersed) and less uncertain (more predictable). The parameter $\eta$ appears to have a minimal effect on uncertainty of the distribution for the examined parameter ranges, whereas $\omega$  primarily governs the entropy behavior. Note that the differential entropy for continuous distributions can be negative.

\section{Random Sample Generation and Simulation}
The following two equivalent algorithms can be used to generate a random variable from the Shiha distribution. First, set the parameter values $\omega, \eta$.

\textbf{Algorithm 1} (Inverse cdf method).
\begin{enumerate}
\item Generate $u\sim U(0, 1)$.
\item Solve numerically the following equation in $x\in (0, 1) $
\begin{equation}\label{eq:fqua}
\omega x + 3\eta x^2 - 4\eta x^2\ln x -(\omega+3\eta)u=0.
\end{equation}
\item Set $Y=(-\ln x)/\omega $.
\end{enumerate}

\textbf{Algorithm 2} (Mixture form of three distributions). Compute the mixture weights $p_1, p_2$, and $p_3$.
\begin{enumerate}
  \item Generate $u \sim U(0,1)$.
  \item Generate $v_{1} \sim$ Exp$(\omega)$, $v_{2} \sim$ Exp$(2 \omega)$, and  $v_{3}\sim$ Gamma$(2, 2\omega)$.
   \item If $u \leq p_1$, set $Y=v_{1}$, if $p_1 < u \leq p_1+p_2$, set $Y=v_{2}$,
     otherwise, set $Y=v_{3}$.
\end{enumerate}

In the subsequent simulation study, random samples are generated from the Shiha distribution. For each sample size $n = 30, 50, 100, 200, 300,$ and $600$, the simulation is repeated $N = 10000$ times to ensure the stability and reliability of the results.

A Monte Carlo simulation study is then conducted to evaluate the performance of the maximum likelihood estimators for the parameters of the proposed distribution. The estimators $\hat{\omega}$ and $\hat{\eta}$ are assessed based on the criteria: bias (estimation accuracy) and mean squared error (MSE). The bias and MSE of the estimator are computed as
\[
\text{Bias}(\hat{\theta}) = \frac{1}{N}\sum_{i=1}^N \hat{\theta}_i - \theta,
\quad
\text{MSE}(\hat{\theta}) = \frac{1}{N}\sum_{i=1}^N \left( \hat{\theta}_i - \theta \right)^2,
\]
where $\hat{\theta}_i$ denotes the estimate of a parameter (either $\omega$ or $\eta$) obtained in the $i$-th replication, and $\theta$ represents the corresponding true parameter value.
This evaluation provides robust insights into the performance of the estimators across different sample sizes. The simulation results are summarized in Tables~\ref{T:sim4}. All estimators are consistent, as both the bias and MSE of $\hat{\omega}$ and $\hat{\eta}$ decrease with increasing sample size, confirming the consistency and reliability of the MLE method for the Shiha distribution.

\begin{table}[!ht]
\centering
\caption{ The bias and MSE for estimated $\omega$ and $\eta$.}\label{T:sim4}
\adjustbox{max width=\textwidth, center}{\scriptsize
\begin{tabular}{| c|c c  c c|c |c c c c|}
\hline
 \multicolumn{5}{|c|}{$\omega=0.5$, $\eta=0.5$} & \multicolumn{5}{|c|}{$\omega=1$, $\eta=0.5$} \\ \hline
  $n$ & $\text{Bias}(\hat{\omega})$ & $\text{Bias}(\hat{\eta})$ & $\text{MSE}(\hat{\omega})$ & $\text{MSE}(\hat{\eta})$ &$n$ & $\text{Bias}(\hat{\omega})$ & $\text{Bias}(\hat{\eta})$ & $\text{MSE}(\hat{\omega})$ & $\text{MSE}(\hat{\eta})$  \\ \hline
 30 & 0.0287  & 0.1753 & 0.0100    & 0.2702 & 30&	0.0480  & 0.1463 & 0.0398    & 0.2625  \\
50 & 0.0203  & 0.1507 & 0.0059    & 0.2480 & 50&	 0.0337  & 0.1311 & 0.0240    & 0.2396  \\
100 & 0.0118  & 0.1398 & 0.0029    & 0.2183 & 100&	 0.0189  & 0.1165 & 0.0120    & 0.2069  \\
200 & 0.0073  & 0.1219 & 0.0015    & 0.1807 & 200&	 0.0122  & 0.0985 & 0.0063    & 0.1674 \\
300 & 0.0057  & 0.1133 & 0.0010    & 0.1633 & 300&	 0.0093  & 0.0955 & 0.0044    & 0.1455  \\
600 & 0.0023  & 0.1043 & 0.0005    & 0.1236 & 600&	 0.0033  & 0.0822 & 0.0022    & 0.1049  \\  \hline

\multicolumn{5}{|c|}{$\omega=0.5$, $\eta=1$} & \multicolumn{5}{|c|}{$\omega=1$, $\eta=1$} \\ \hline

30 & 0.0320  & 0.2313 & 0.0098    & 0.7653  &30	& 0.0579  & 0.2410 & 0.0398    & 0.8409  \\
50 & 0.0237  & 0.1903 & 0.0059    & 0.7206  &50	 & 0.0413  & 0.2026 & 0.0238    & 0.7759  \\
100 & 0.0142  & 0.1751 & 0.0029    & 0.6441  &100	 & 0.0244  & 0.1929 & 0.0115    & 0.6830  \\
200 & 0.0087  & 0.1548 & 0.0014    & 0.5593 &200 	 & 0.0153  & 0.1732 & 0.0061    & 0.5681  \\
300 & 0.0067  & 0.1498 & 0.0009    & 0.5105  &300	 & 0.0120  & 0.1627 & 0.0041    & 0.5127  \\
600 & 0.0031  & 0.1436 & 0.0004    & 0.4223  &400	 & 0.0049  & 0.1538 & 0.0020    & 0.3952  \\ \hline

\multicolumn{5}{|c|}{$\omega=0.5$, $\eta=1.5$} & \multicolumn{5}{|c|}{$\omega=1$, $\eta=1.5$} \\ \hline
30 & 0.0331  & 0.1091 & 0.0097    & 1.1293  &	30 & 0.0637  & 0.0566 & 0.0401    & 1.1088  \\
50 & 0.0248  & 0.0571 & 0.0059    & 1.0913  &	50 & 0.0474  & 0.0300 & 0.0243    & 1.0424  \\
100 & 0.0157  & 0.0280 & 0.0029    & 0.9955  &	100 & 0.0282  & 0.0250 & 0.0116    & 0.9098  \\
200 & 0.0097  & 0.0248 & 0.0014    & 0.8718  &	200 & 0.0177  & 0.0233 & 0.0058    & 0.7683  \\
300 & 0.0074  & 0.0187 & 0.0009    & 0.7911  &	300 & 0.0144  & 0.0148 & 0.0038    & 0.6933  \\
600 & 0.0038  & 0.0108 & 0.0004    & 0.6626  &	600 & 0.0067  & 0.0116 & 0.0018    & 0.5503  \\  \hline

\multicolumn{5}{|c|}{$\omega=0.5$, $\eta=2$} & \multicolumn{5}{|c|}{$\omega=1$, $\eta=2$} \\ \hline

30 & 0.0338  & 0.1501 & 0.0097    & 2.0222  &	30 & 0.0648  & 0.1041 & 0.0390    & 1.9827  \\
50 & 0.0253  & 0.0812 & 0.0059    & 1.9704  &	50 & 0.0485  & 0.0498 & 0.0239    & 1.8855  \\
100 & 0.0156  & 0.0434 & 0.0029    & 1.8135  &	100 & 0.0294  & 0.0480 & 0.0114    & 1.6686  \\
200 & 0.0099  & 0.0318 & 0.0014    & 1.6171  &	200 & 0.0184  & 0.0445 & 0.0056    & 1.4372  \\
300 & 0.0075  & 0.0173 & 0.0009    & 1.4908  &	300 & 0.0142  & 0.0387 & 0.0037    & 1.2858  \\
600 & 0.0039  & 0.0157 & 0.0004    & 1.2554  &	600 & 0.0071  & 0.0345 & 0.0017    & 1.0501  \\ \hline

\end{tabular}}
\end{table}

\section{Applications}
In this section, four real data sets are analyzed to demonstrate the applicability of the proposed Shiha distribution and compare it with some well-known distributions.

\noindent  \textbf{Data set 1:}
The first data set discussed by \cite{murt}, are the failure times of eight
components at three different temperatures 100, 120, 140. The data are:
14.712, 32.644, 61.979, 65.521, 105.50, 114.60, 120.40, 138.50, 8.610, 11.741, 54.535,
55.047, 58.928, 63.391, 105.18, 113.02, 2.998, 5.016, 15.628, 23.040, 27.851, 37.843,
38.050, 48.226

\noindent  \textbf{Data set 2:} The data set contains 34 measurements of vinyl chloride concentrations (mg/l) from clean up-gradient monitoring wells \cite{bhau}:
5.1, 1.2, 1.3, 0.6, 0.5, 2.4, 0.5, 1.1, 8, 0.8, 0.4, 0.6, 0.9, 0.4, 2, 0.5, 5.3, 3.2, 2.7, 2.9, 2.5, 2.3, 1,
0.2, 0.1, 0.1, 1.8, 0.9, 2, 4, 6.8, 1.2, 0.4, 0.2

\noindent  \textbf{Data set 3:}
The data set consists of 59 annual maximum precipitation observations recorded in Karachi, Pakistan, covering the period 1950–2009 \cite{phat}:
117.6, 157.7, 148.6, 11.4, 5.6, 63.6, 62.4, 11.8, 6.5, 54.9, 39.9, 16.8, 30.2, 38.4, 76.9, 73.4,
85, 256.3, 24.9, 148.6, 160.5, 131.3, 77, 155.2, 217.2, 105.5, 166.8, 157.9, 73.6, 291.4,
210.3, 315.7, 107.7, 33.3, 302.6, 159.1, 78.7, 33.2, 52.2, 92.7,150.4, 43.7, 68.3, 20.8,
179.4, 245.7, 19.5, 30, 270.4, 160, 96.3, 185.7, 429.3, 184.9, 262.5, 80.6, 138.2, 28, 39.3

 \noindent \textbf{Data set 4:}
The following observations represent the failure times (in minutes) obtained from a sample of 15 electronic components subjected to an accelerated life test \cite{law}:
1.4, 5.1, 6.3, 10.8, 12.1, 18.5, 19.7, 22.2, 23, 30.6, 37.3, 46.3, 53.9, 59.8, 66.2

%A brief description of the data and their main characteristics is provided below. The descriptive statistics, namely minimum (Min), %first quartile (Q1), median, mean, third quartile (Q3), maximum (Max), skewness (SK), and kurtosis (Kur), are summarized in %Table~\ref{tab:desc}
A brief description of the data and their main characteristics is provided in Table~\ref{tab:desc}
\begin{table}[!ht]
\centering
\caption{Summary statistics for the considered data sets }
\label{tab:desc}
\begin{tabular}{c c c c c c c c c c}
\hline
Data& Min & Q1 & Median & Q3 & Max & Mean & Variance & Sk & Kur \\ \hline
1& 2.998 & 21.187 & 51.38 & 75.436 & 138.5 & 55.123 & 1685.495 & 0.555 & 5.108 \\ \hline
2&0.1 & 0.5 & 1.15 & 2.475 & 8 & 1.879 & 3.813 & 1.604 & 8.005 \\ \hline
3 & 5.6 & 39.6 & 92.7 & 160.25 & 429.3 & 118.397 & 8688.987 & 1.024 & 6.766 \\ \hline
4& 1.4 & 11.45 & 22.2 & 41.8 & 66.2 & 27.547 & 431.117 & 0.566 & 5.06 \\ \hline
\end{tabular}
\end{table}

As an initial nonparametric check, we employ the total time on test (TTT) transform plot for each dataset to assess the suitability of the proposed distribution. We also examine the kernel density plot, box plot, and violin plot. From Figures~\ref{fig:ttt1}-\ref{fig:ttt3}, it is evident that all data sets are right-skewed. Moreover, data sets 1, 3, and 4 exhibit an increasing failure rate, while data set 2 shows a decreasing failure rate. These features suggest that the Shiha distribution is an appropriate model for fitting the data.
\begin{figure}[H]
  \centering
  \includegraphics[width=13cm, height=3.6cm]{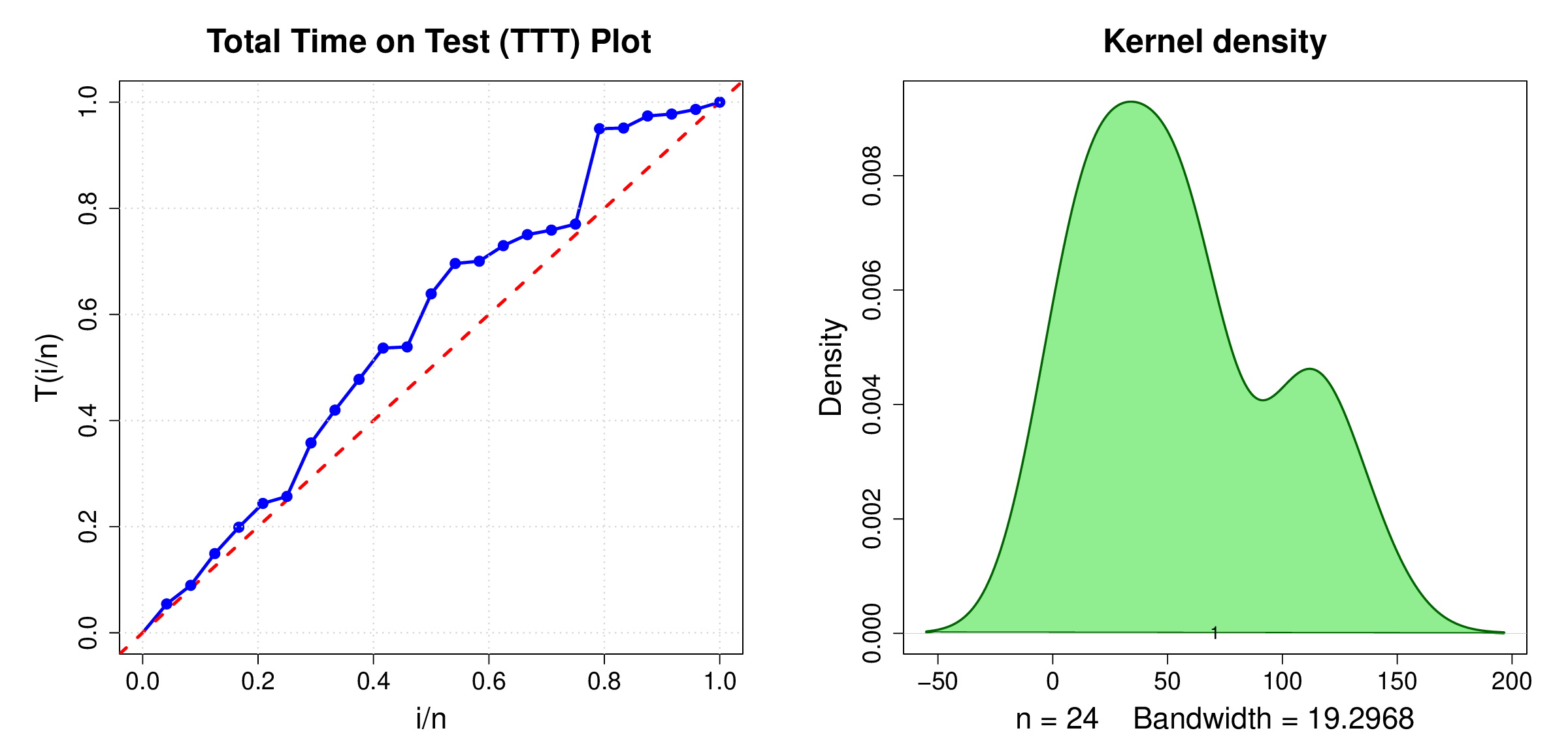}
  \includegraphics[width=12.5cm, height=3.4cm]{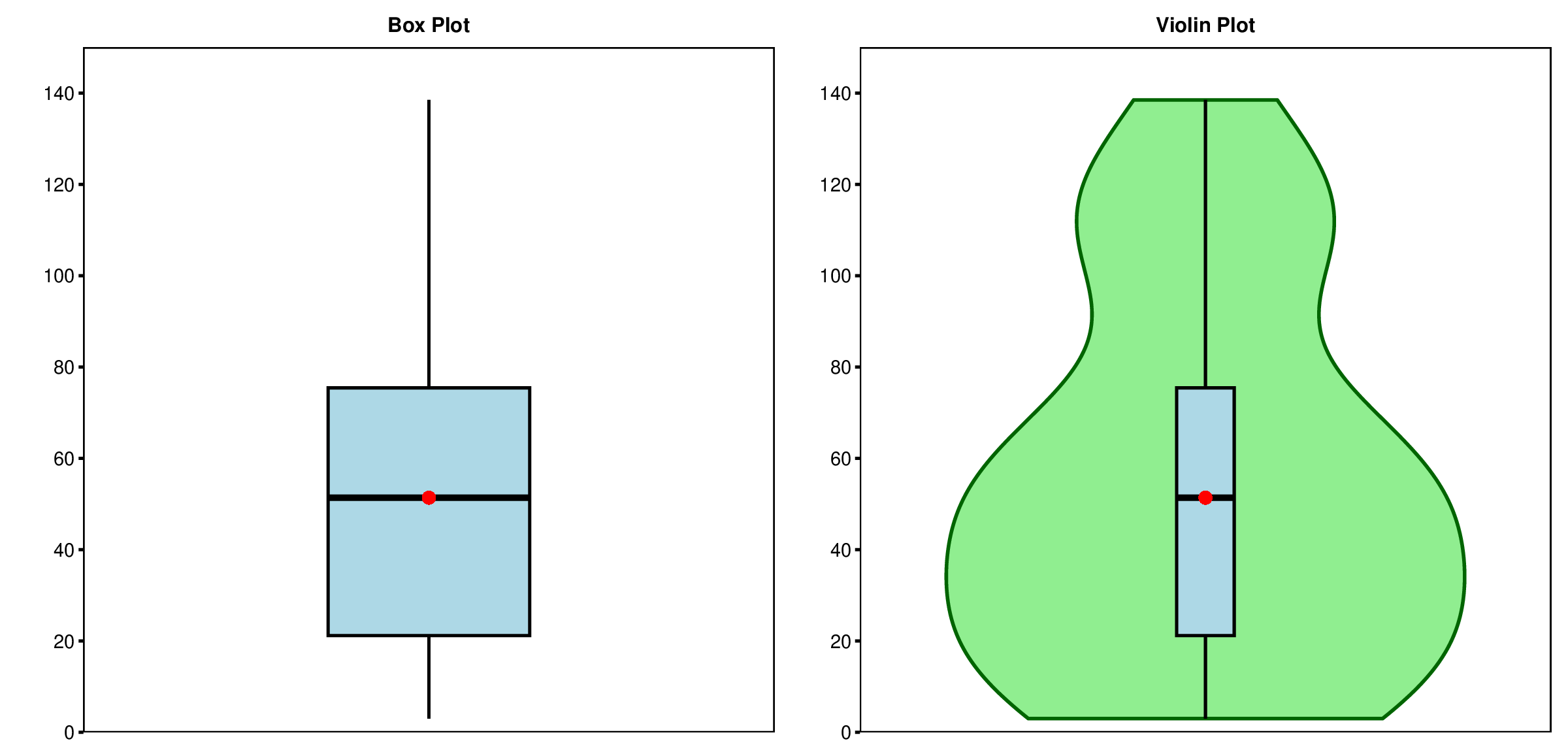}
  \caption{Graphical representation of data set 1.}
 \label{fig:ttt1}
\end{figure}
\begin{figure}[H]
  \centering
  \includegraphics[width=13cm, height=3.6cm]{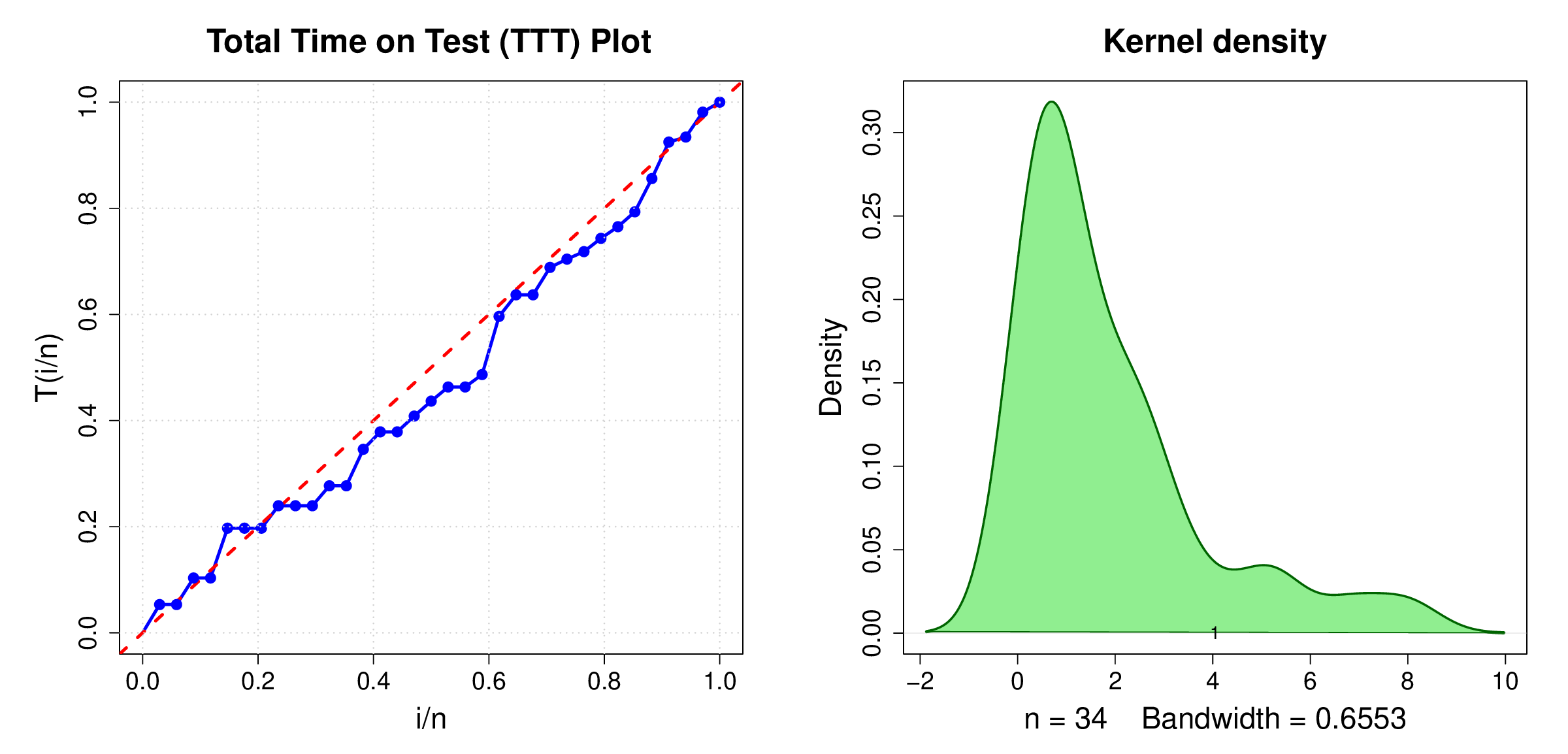}
  \includegraphics[width=12.5cm, height=3.4cm]{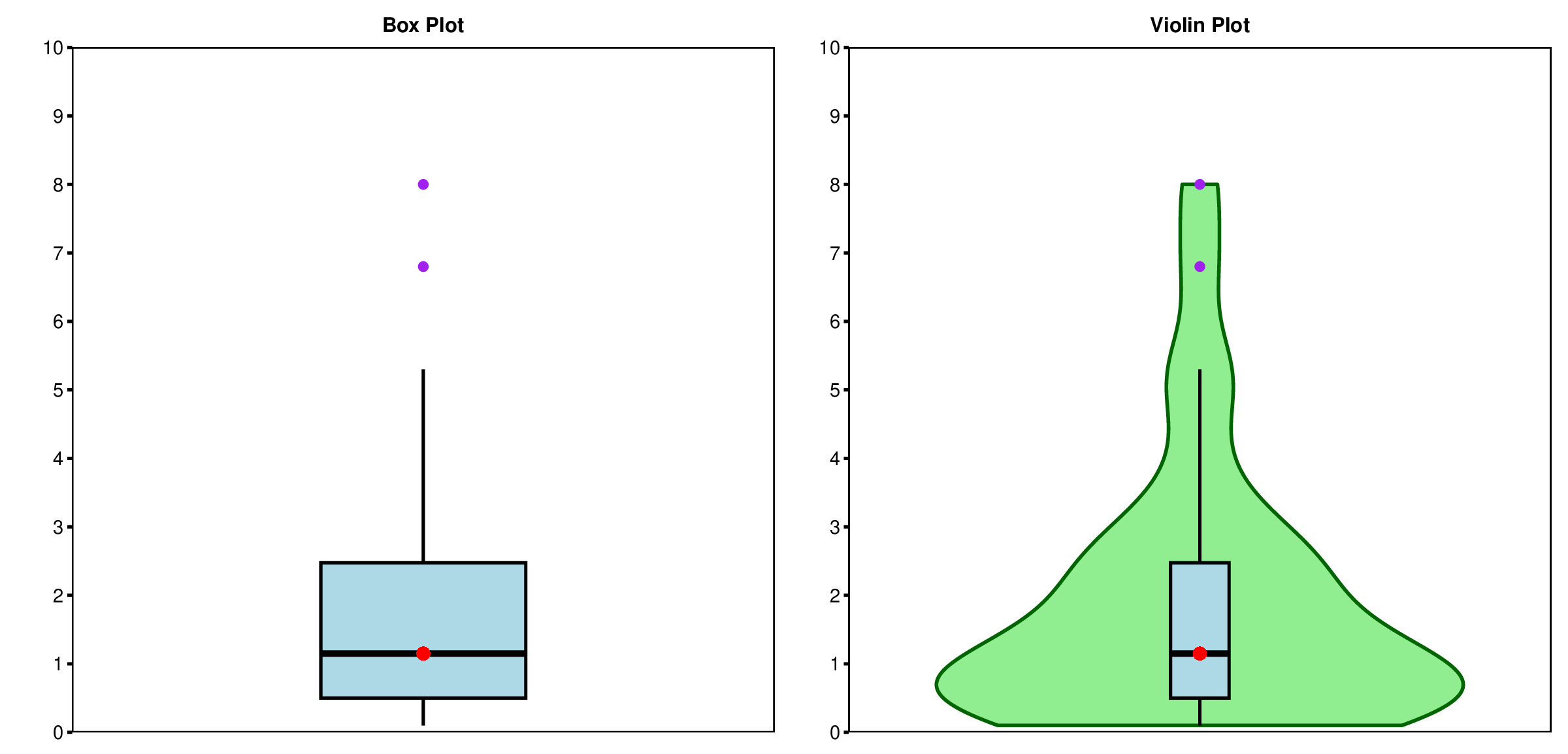}
  \caption{Graphical representation of data set 2.}
 \label{fig:ttt2}
\end{figure}
\begin{figure}[H]
  \centering
  \includegraphics[width=13cm, height=3.6cm]{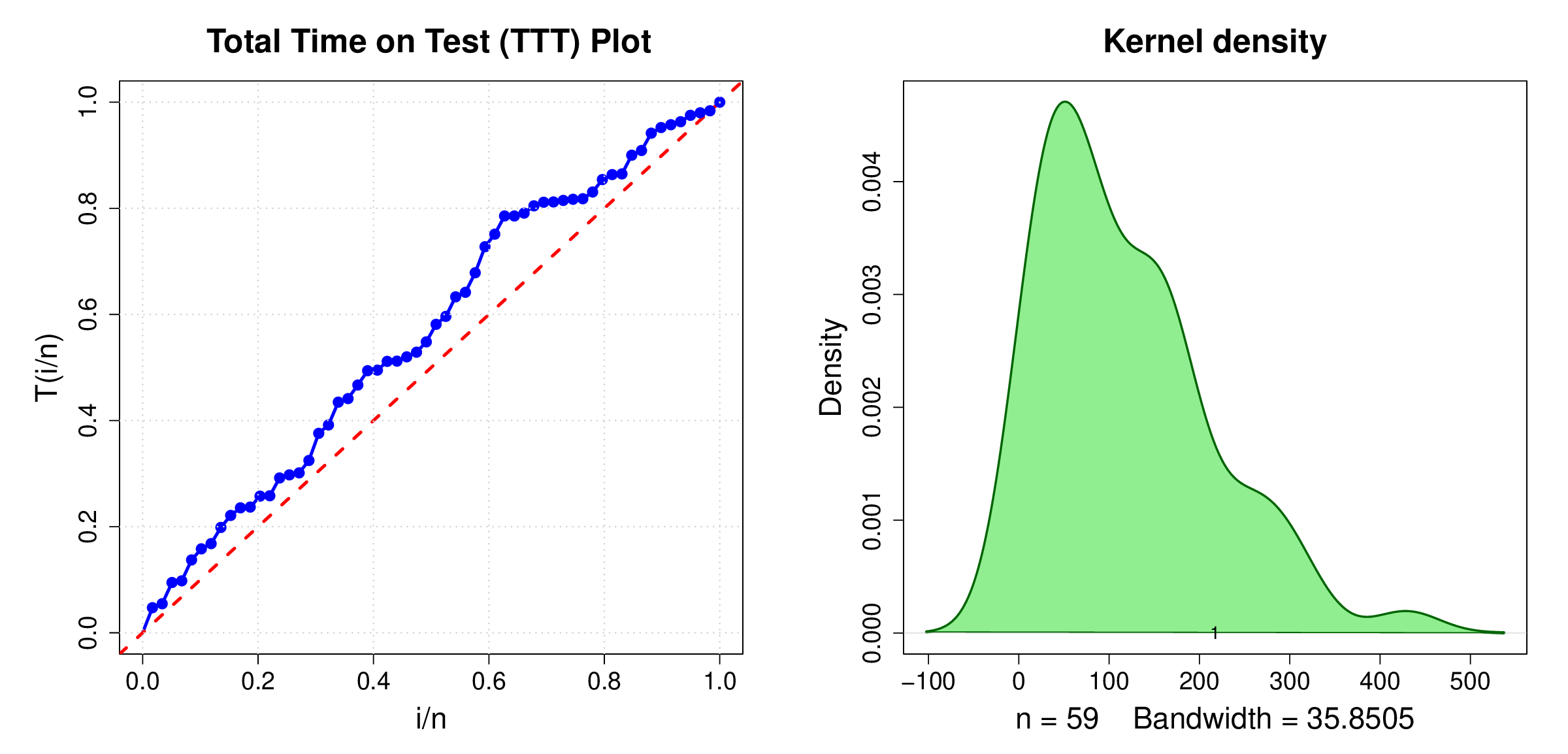}
  \includegraphics[width=12.5cm, height=3.4cm]{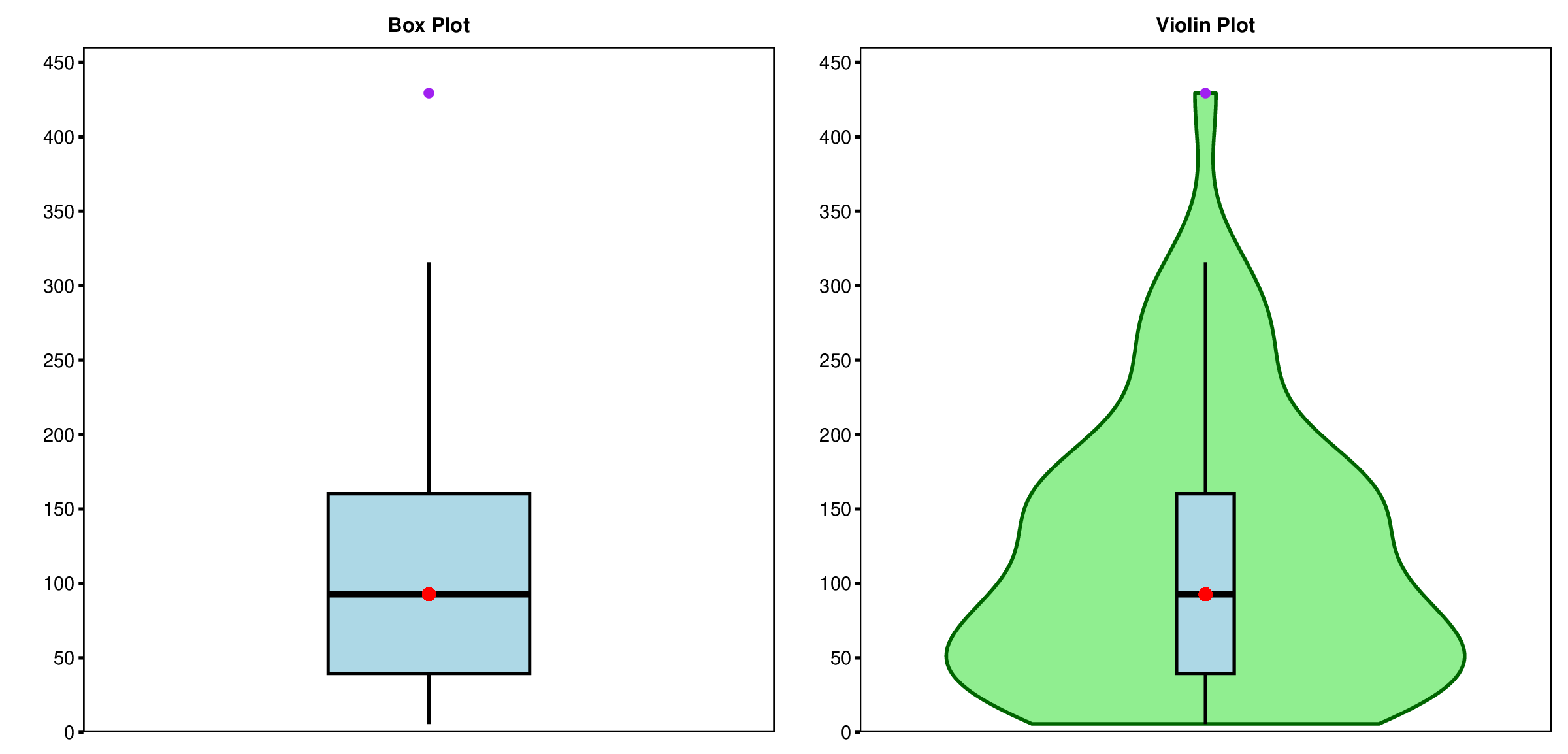}
  \caption{Graphical representation of data set 3.}
 \label{fig:ttt3}
\end{figure}
\begin{figure}[H]
  \centering
  \includegraphics[width=13cm, height=3.6cm]{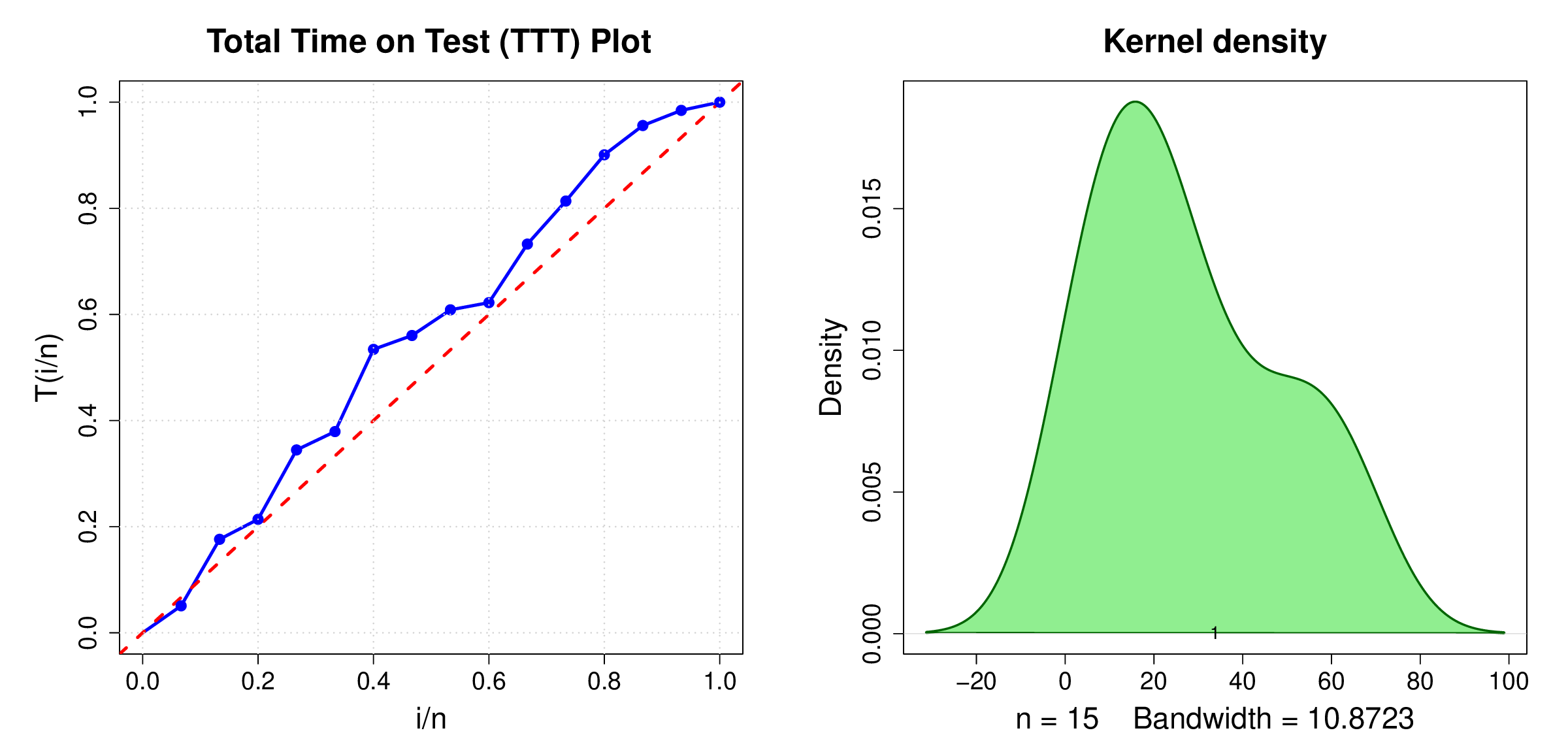}
  \includegraphics[width=12.5cm, height=3.4cm]{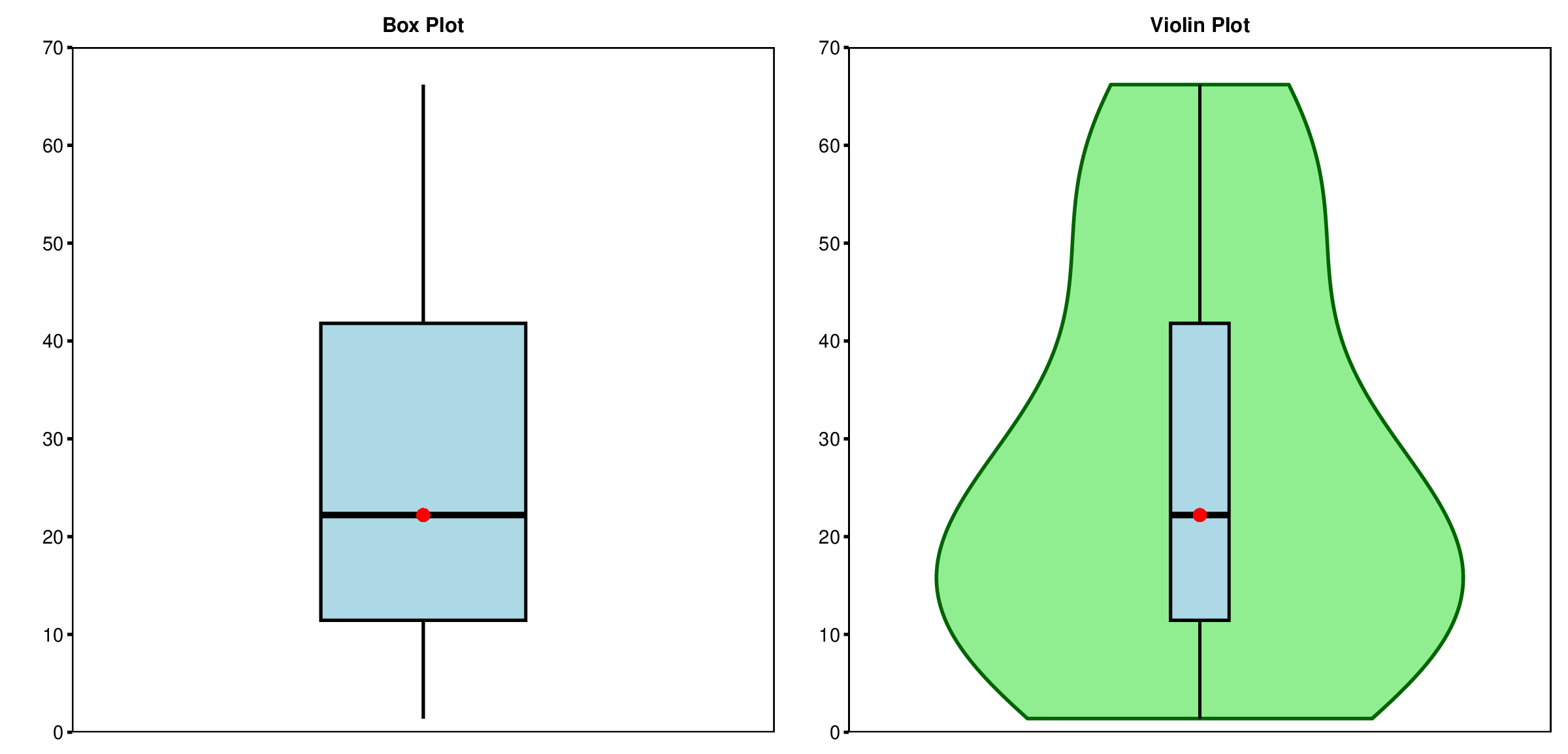}
  \caption{Graphical representation of data set 4.}
 \label{fig:ttt4}
\end{figure}
To examine whether the Shiha distribution provides an adequate fit to the observed data, we apply several goodness-of-fit tests. The Kolmogorov--Smirnov (K-S) statistic, which measures the maximum difference between the empirical and theoretical distribution functions. Since the K-S test is less sensitive in the tails, we also employ the Anderson--Darling (A-D) test, which gives more weight to deviations in the distribution tails. For this purpose, model selection is based on likelihood-based measures evaluated at the maximum likelihood estimates (MLEs). Specifically, the Akaike information criterion AIC $= 2k-2 \log L$  and Bayesian information criterion BIC $= k \log n-2 \log L $ are used, where $log L$ denotes the log-likelihood function at the MLE of the parameters, $k$ is the number of parameters, and $n$ is the sample size. The model achieving the lowest values of these statistics is regarded as the best-fitting candidate. To further assess the performance of the proposed Shiha distribution, it is compared with several well-known lifetime distributions:

\begin{enumerate}
  \item Alpha power transformed xgamma distribution (APTXGD) \cite{shukla}
\begin{equation}
f(y;\omega, \eta) =
\begin{cases}
\frac{\log \eta}{\eta - 1} \frac{\omega^2}{(1 + \omega)} \left( 1 + \frac{\omega}{2} y^2 \right) e^{-\omega y} \eta^{1 - u_0}, & \eta > 0, \, \eta \neq 1 \\
\frac{\omega^2}{(1 + \omega)} \left( 1 + \frac{\omega}{2} y^2 \right) e^{-\omega y}, & \eta = 1
\end{cases}
\end{equation}
where
\[
 u_0 = \frac{\left( 1 + \omega + \omega y + \frac{\omega^2 y^2}{2} \right)}{(1 + \omega)} e^{-\omega y},\quad y > 0, \, \omega > 0.
\]
  \item Power Lindley distribution (PLD) \cite{ghit}
\begin{equation}
f(y;\omega,\eta) = \frac{\eta \, \omega^{2}}{1+\omega}
\left( 1 + y^{\eta} \right) y^{\eta-1} e^{-\omega y^{\eta}},
\quad y > 0, \; \omega > 0, \; \eta > 0,
\end{equation}
  \item The three parameter generalized Lindley distribution (TPGLD) \cite{ekho}
\begin{equation}
f(y;\omega,\eta, \alpha) = \frac{\alpha \, \omega^{2} \, (\eta + y^{\alpha}) \, y^{\alpha - 1} e^{-\omega y^{\alpha}}}{1 + \omega \eta},
\quad y > 0, \omega, \eta > 0, \alpha>0
\end{equation}
\item Chris-Jerry distribution (CJD) \cite{chri}
\begin{equation}
f(y;\omega)=\frac{{\omega}^2}{\omega +2}(1+\omega \, y^2) e^{-\omega \, y}\quad , y>0 ,\,  \omega >0.
\end{equation}
\item Akash distribution (AKD) \cite{shanker1}
\begin{equation}
f(y;\omega)=\frac{{\omega}^3}{{\omega}^2 +2}(1+ y^2) e^{-\omega \, y}\quad , y>0 ,\,  \omega >0.
\end{equation}
\end{enumerate}
Tables~\ref{T:app1}–\ref{T:app4} present the MLEs of the parameters, together with the corresponding goodness-of-fit statistics for each fitted distribution across the four data sets. Additionally, the graphical goodness-of-fit assessments for the four data sets are shown in Figures~\ref{fig:t1}–\ref{fig:t4}. Each figure presents the histogram with the fitted densities, the theoretical and empirical cdfs, the QQ and PP plots for the six competing models.

\begin{table}[!ht]
\centering
\caption{Parameter estimates, and goodness-of-fit measures for the data set 1.}\label{T:app1}
\adjustbox{max width=\textwidth, center}{
\begin{tabular}{|l|  l|  l| c| c| c| c| c |}\hline
Model  & estimates   &AIC   &BIC     & A-D (p-value)  & K-S (p-value)     \\ \hline

\textbf{Shiha }& $\hat{\omega}=0.0152$, $\hat{\eta}=1.4689$  & \textbf{242.5978} & \textbf{244.9539} & \textbf{0.2988 (0.9385)} & \textbf{0.1210 (0.8326)} \\
APTXGD & $\hat{\omega}=0.0447$, $\hat{\eta}=0.3623$ & 246.9883 & 249.3444 & 1.2148 (0.2616) & 0.1665 (0.4693) \\
PLD & $\hat{\omega}=0.0602$, $\hat{\eta}=0.8744$  & 243.0938 & 245.4499 & 0.3691 (0.8776) & 0.1231 (0.8177) \\
TPGLD & $\hat{\omega}=0.0118$, $\hat{\eta}=1.1825$, $\hat{\alpha}=109.8619$ & 244.0922 & 247.6263 & 0.3134 (0.9271) & 0.1259 (0.7963) \\
CJD & $\hat{\omega}=0.0527$  & 247.4051 & 248.5831 & 1.8339 (0.1140) & 0.1862 (0.3340) \\
AKD &$\hat{\omega}=0.0544$  & 250.5197 & 251.6977 & 2.3536 (0.0597) & 0.1944 (0.2859) \\ \hline
\end{tabular}}
\end{table}

\begin{figure}[H]
  \centering
  \includegraphics[width=14cm, height=4.5cm]{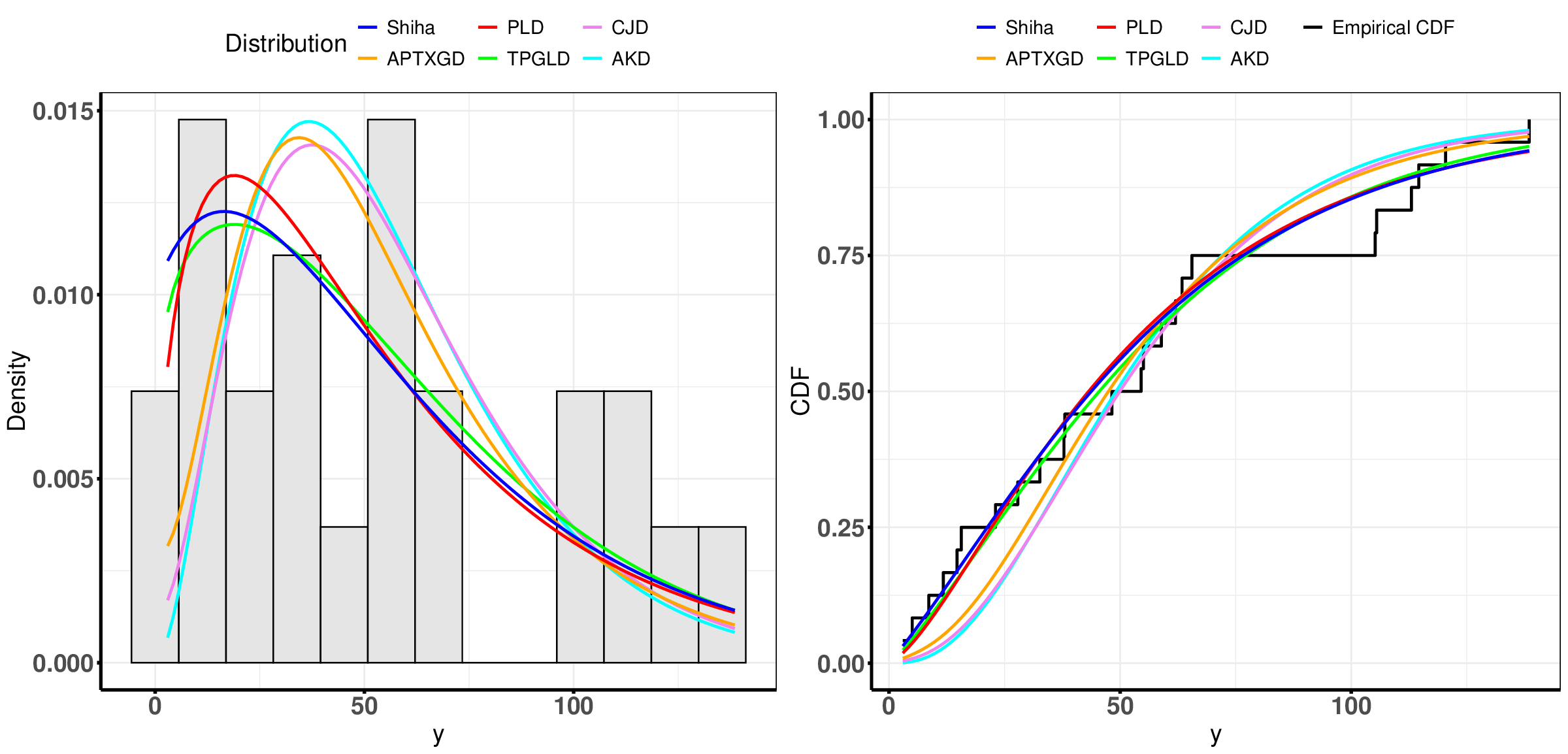}
  \includegraphics[width=14cm, height=4.5cm]{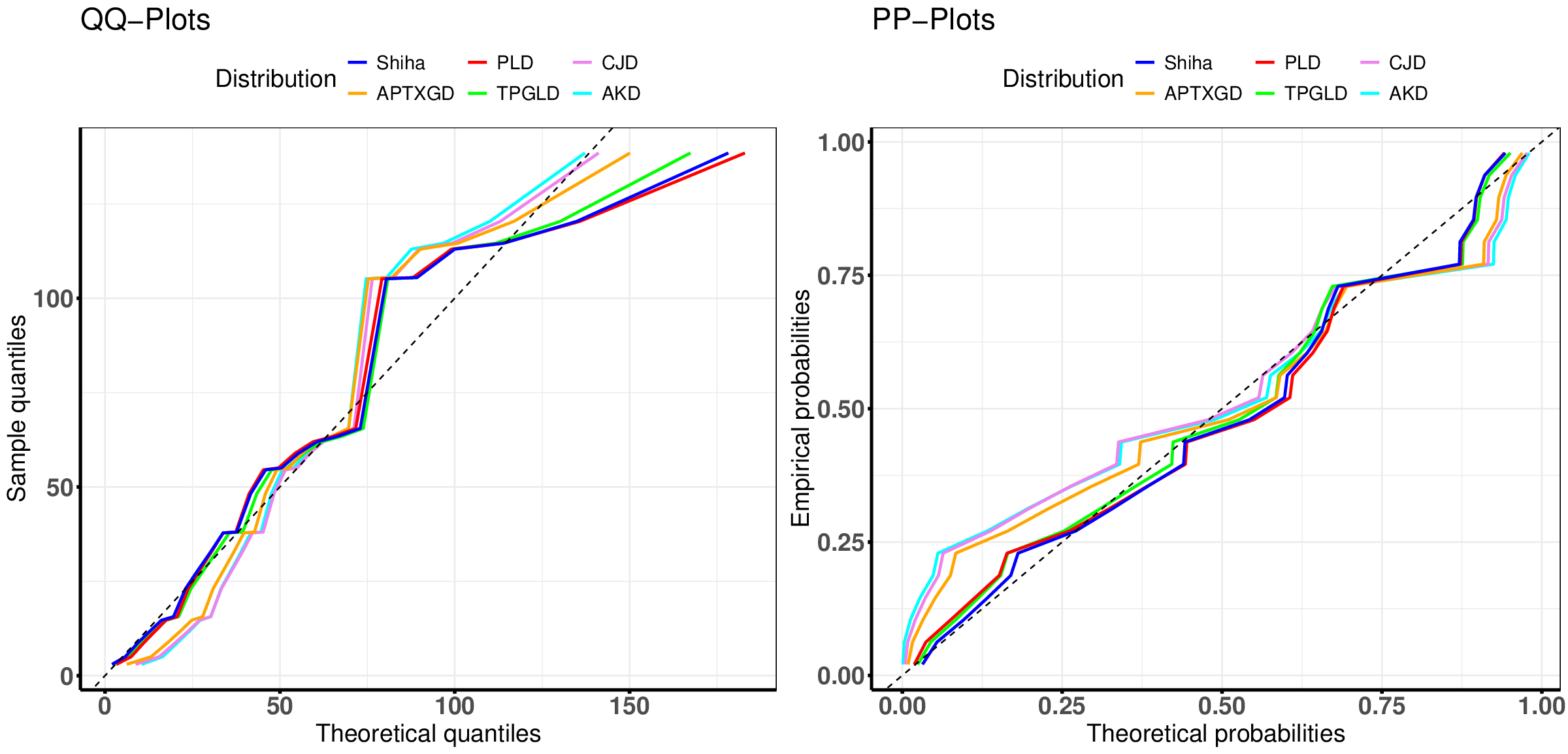}
  \caption{The fitted pdf, cdf, QQ, and PP for data set 1}
 \label{fig:t1}
\end{figure}

\begin{table}[!ht]
\centering
\caption{Parameter estimates, and goodness-of-fit measures for the data set 2.}\label{T:app2}
\adjustbox{max width=\textwidth, center}{
\begin{tabular}{|l|  l|  l| c| c| c| c| c |}\hline
Model  & estimates   &AIC   &BIC     & A-D (p-value)  & K-S (p-value)     \\ \hline
\textbf{Shiha}& $\hat{\omega}=0.532$, $\hat{\eta}=1e-04$  & \textbf{114.9055} & \textbf{117.9582} &\textbf{ 0.2720 (0.9573)} & \textbf{0.0890 (0.9506)} \\
APTXGD & $\hat{\omega}=0.6911$, $\hat{\eta}=0.1182$ & 115.2675 & 118.3202 & 0.2979 (0.9394) & 0.1012 (0.8774) \\
PL & $\hat{\omega}=0.9139$, $\hat{\eta}=0.8832$  & 115.5198 & 118.5726 & 0.3119 (0.9285) & 0.0944 (0.9226) \\
TPGLD & $\hat{\omega}=0.5272$, $\hat{\eta}=1.0102$, $\hat{\alpha}=1000$ & 116.8992 & 121.4783 & 0.2826 (0.9504) & 0.0918 (0.9366) \\
CJD & $\hat{\omega}=1.1644$  & 117.8536 & 119.3800 & 1.2426 (0.2517) & 0.1783 (0.2301) \\
AKD & $\hat{\omega}=1.1657$  & 117.1493 & 118.6756 & 0.9436 (0.3876) & 0.1564 (0.3762) \\ \hline

\end{tabular}}
\end{table}

\begin{figure}[H]
  \centering
  \includegraphics[width=14cm, height=4.5cm]{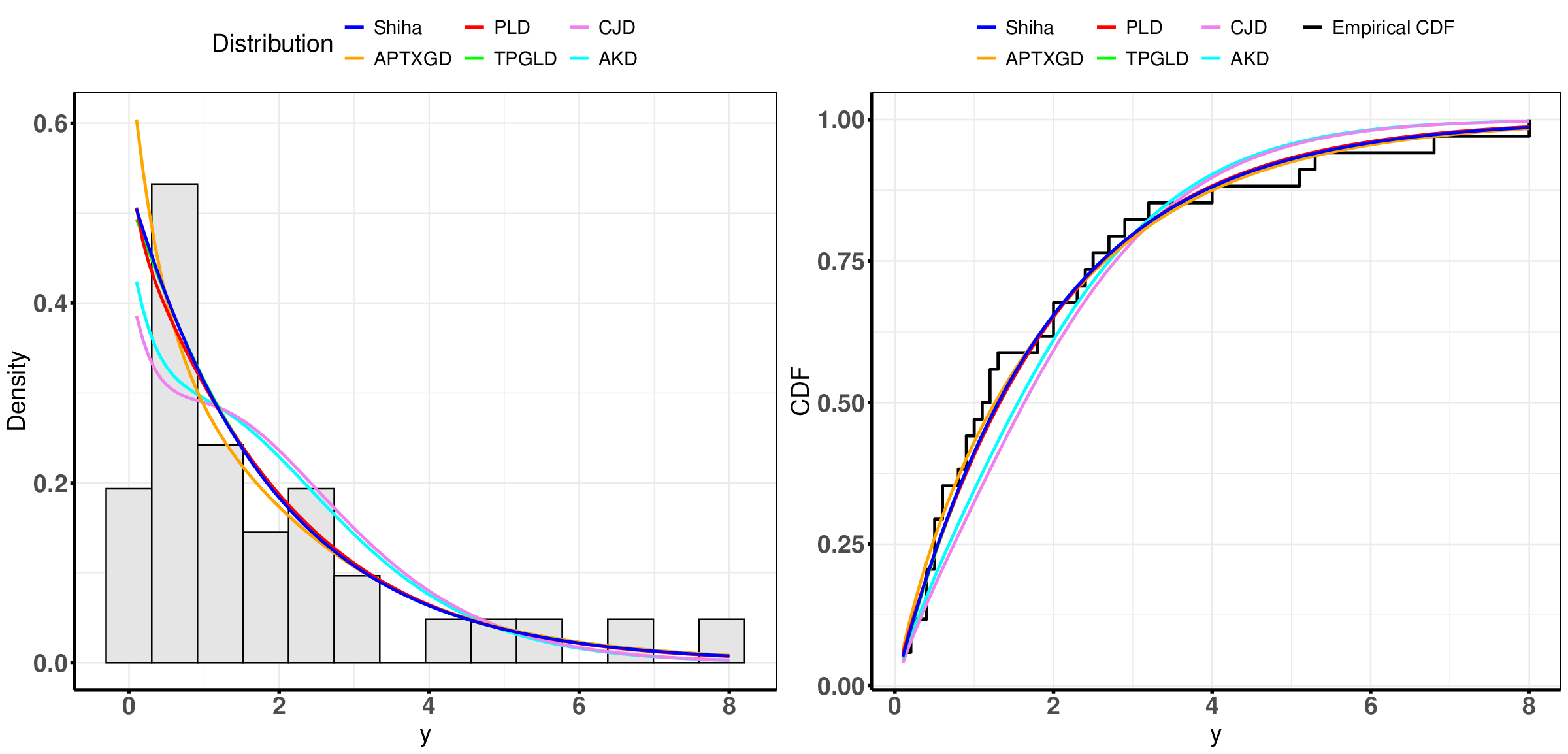}
   \includegraphics[width=14cm, height=4.5cm]{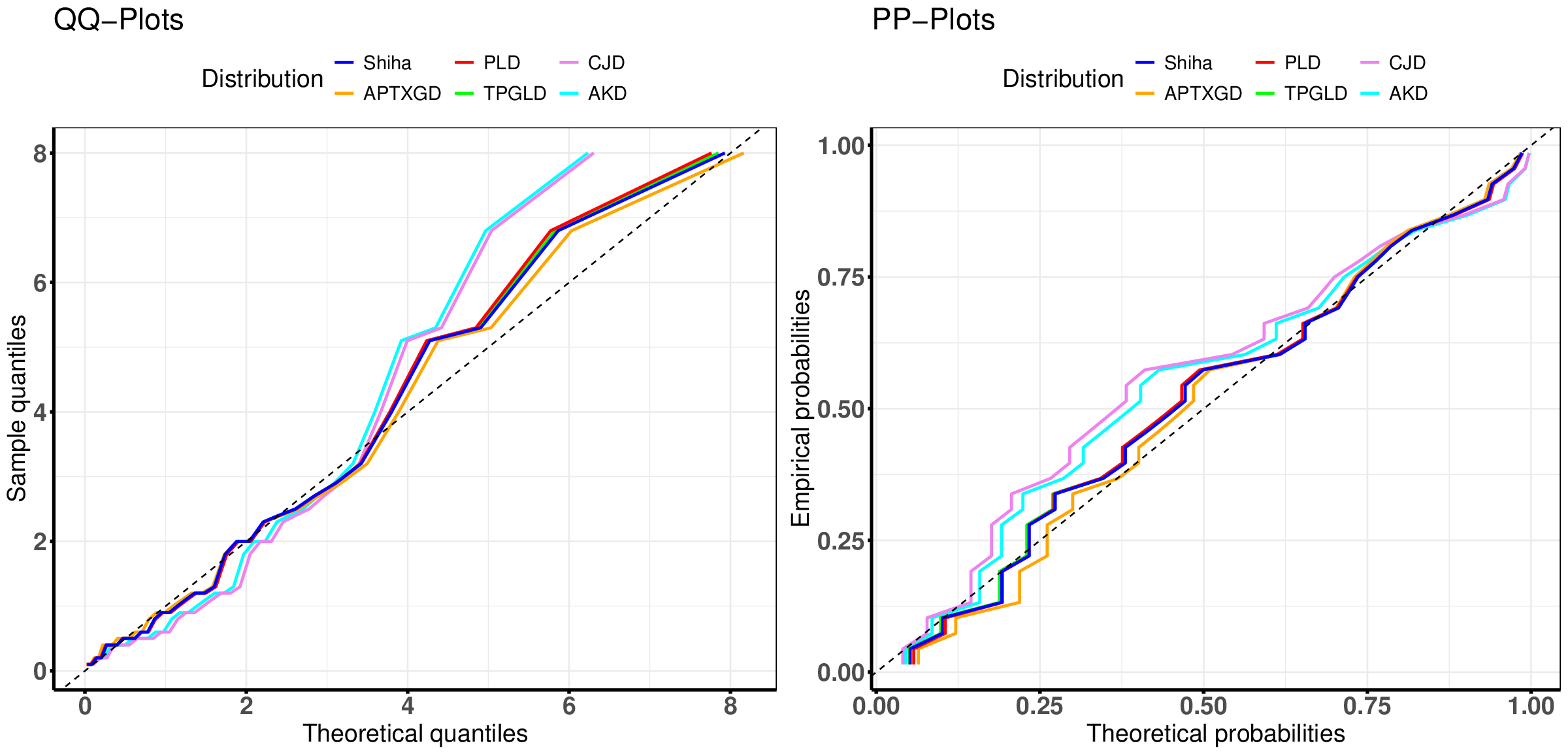}
  \caption{The fitted pdf, cdf, QQ, and PP for data set 2}
 \label{fig:t2}
\end{figure}

\begin{table}[!ht]
\centering
\caption{Parameter estimates, and goodness-of-fit measures for the data set 3.}\label{T:app3}
\adjustbox{max width=\textwidth, center}{
\begin{tabular}{|l|  l|  l| c| c| c| c| c |}\hline
Model  & estimates   &AIC   &BIC     & A-D (p-value)  & K-S (p-value)     \\ \hline
\textbf{Shiha} & $\hat{\omega}=0.0071$, $\hat{\eta}=0.0586$  & \textbf{681.6567} & \textbf{685.8118} & \textbf{0.2854 (0.9485)} & \textbf{0.0973 (0.6317)} \\
APTXGD & $\hat{\omega}=0.0207$, $\hat{\eta}=0.2844$  & 696.3373 & 700.4924 & 3.3364 (0.0187) & 0.1669 (0.0748) \\
PLD & $\hat{\omega}=0.0354$, $\hat{\eta}=0.8498$  & 681.7449 & 685.9000 & 0.3749 (0.8728) & 0.1057 (0.5255) \\
TPGLD & $\hat{\omega}=0.0389$, $\hat{\eta}=0.8333 $, $\hat{\alpha}=0.231$ & 683.6996 & 689.9322 & 0.3870 (0.8610) & 0.1069 (0.5106) \\
CJD & $\hat{\omega}=0.0249$ & 699.7348 & 701.8123 & 4.6818 (0.0041) & 0.1852 (0.0350) \\
AKD & $\hat{\omega}=0.0253$  & 703.7940 & 705.8715 & 5.2981 (0.0021) & 0.1886 (0.0301) \\  \hline
\end{tabular}}
\end{table}

\begin{figure}[H]
  \centering
  \includegraphics[width=14cm, height=4.5cm]{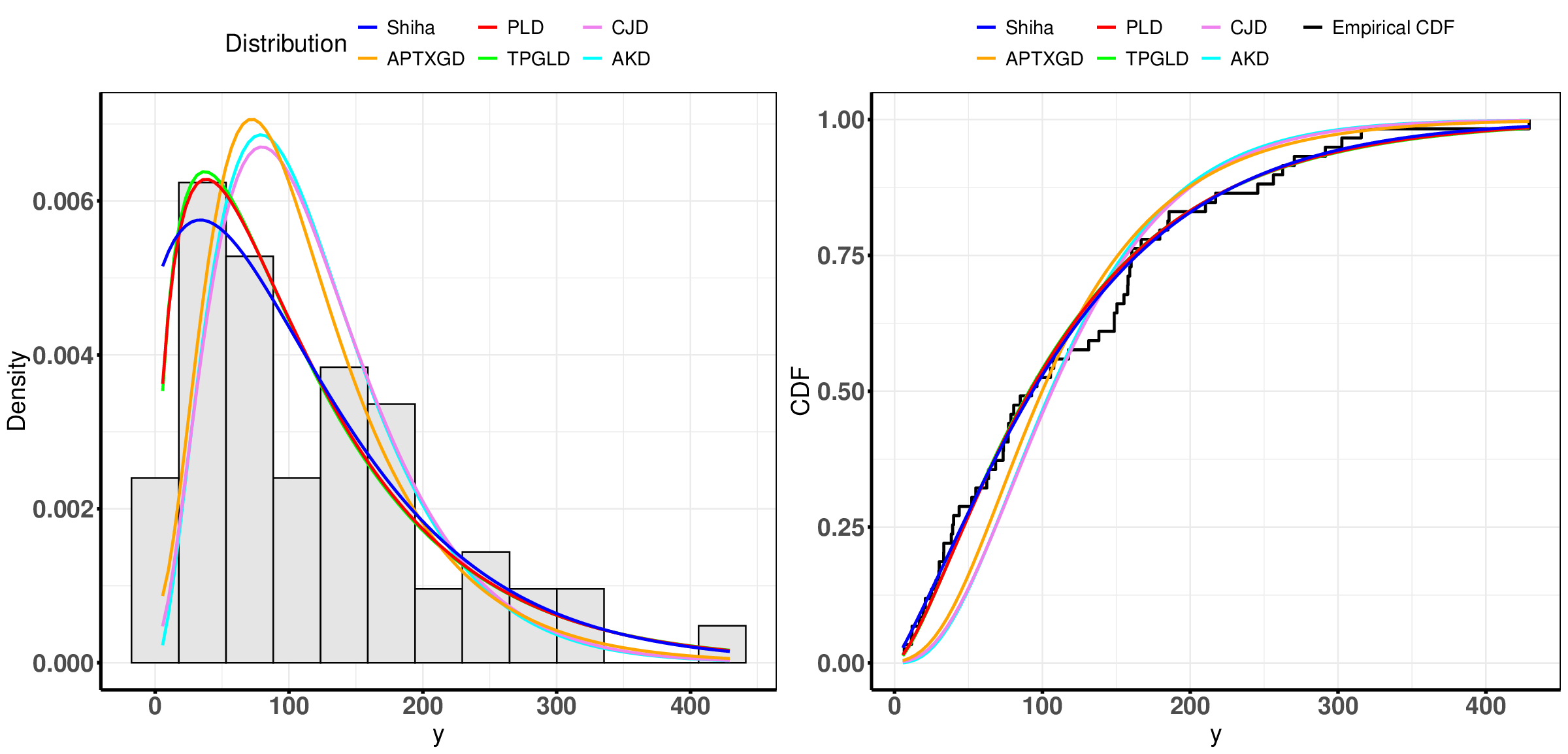}
   \includegraphics[width=14cm, height=4.5cm]{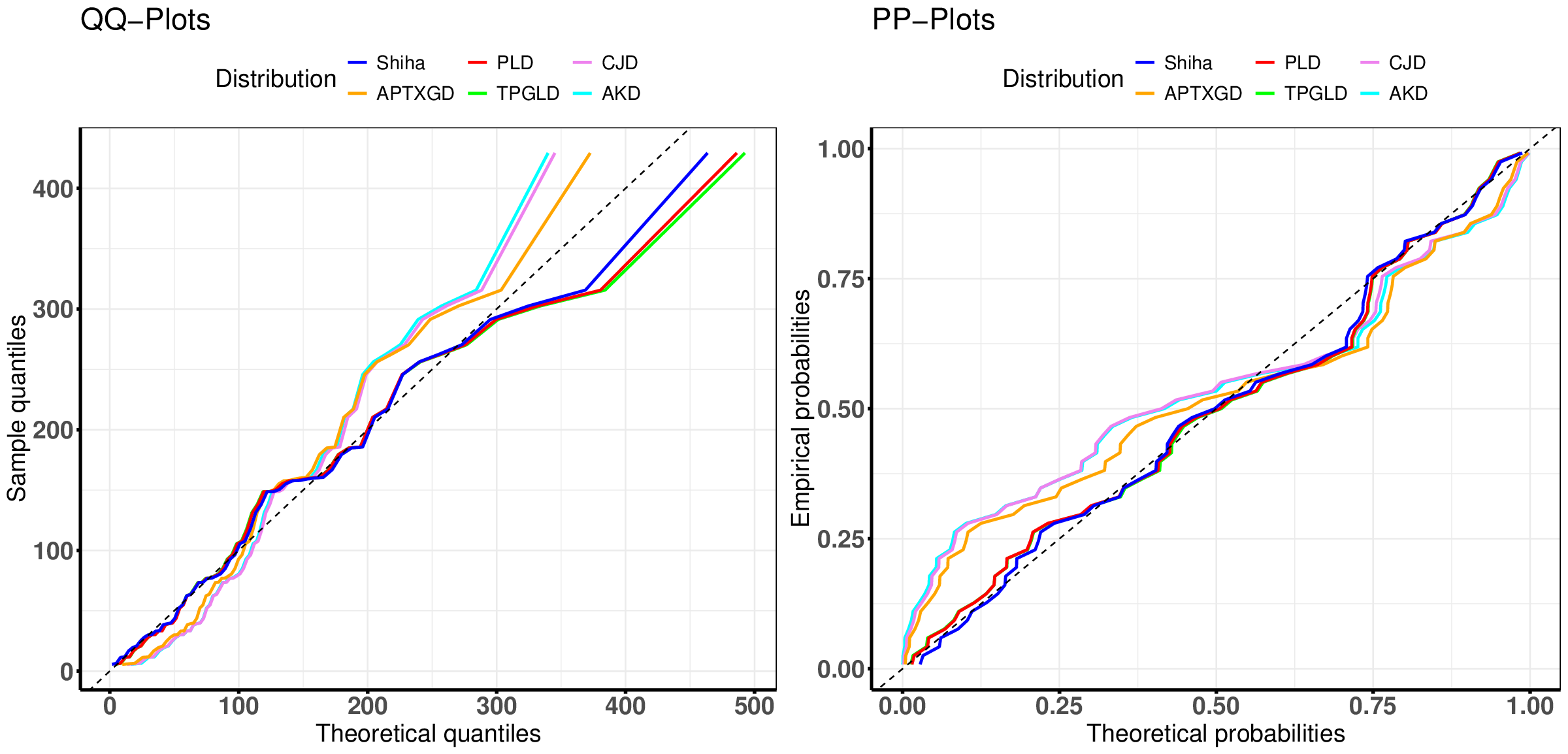}
  \caption{The fitted pdf, cdf, QQ, and PP for data set 3}
 \label{fig:t3}
\end{figure}

\begin{table}[!ht]
\centering
\caption{Parameter estimates, and goodness-of-fit measures for the data set 4.}\label{T:app4}
\adjustbox{max width=\textwidth, center}{
\begin{tabular}{|l|  l|  l|  c| c| c| c |}\hline
Model  & estimates    &AIC   &BIC     & A-D (p-value)  & K-S (p-value)     \\ \hline
\textbf{Shiha }& $\hat{\omega}=0.0304$, $\hat{\eta}=5.6774$  & \textbf{132.2349} & \textbf{133.6510} & \textbf{0.1661 (0.9973)} & 0.0979 (0.9958) \\
APTXGD & $\hat{\omega}=0.0861$, $\hat{\eta}=0.3527$  & 133.4417 & 134.8578 & 0.4649 (0.7801) & 0.1342 (0.9169) \\
PLD & $\hat{\omega}=0.0977$, $\hat{\eta}=0.9043$ & 132.4768 & 133.8929 & 0.1950 (0.9920) & 0.0986 (0.9954) \\
TPGLD & $\hat{\omega}=0.02231$, $\hat{\eta}= 1.2231$, $\hat{\alpha}=71.4391$  & 133.9388 & 136.0629 & 0.1819 (0.9948) & \textbf{0.0971 (0.9962)} \\
CJD & $\hat{\omega}=0.1034$  & 132.9871 & 133.6951 & 0.8732 (0.4291) & 0.1731 (0.6978) \\
AKD & $\hat{\omega}=0.1085$  & 135.6842 & 136.3922 & 1.2757 (0.2400) & 0.1841 (0.6248) \\
 \hline
\end{tabular}}
\end{table}
\begin{figure}[H]
  \centering
  \includegraphics[width=14cm, height=4cm]{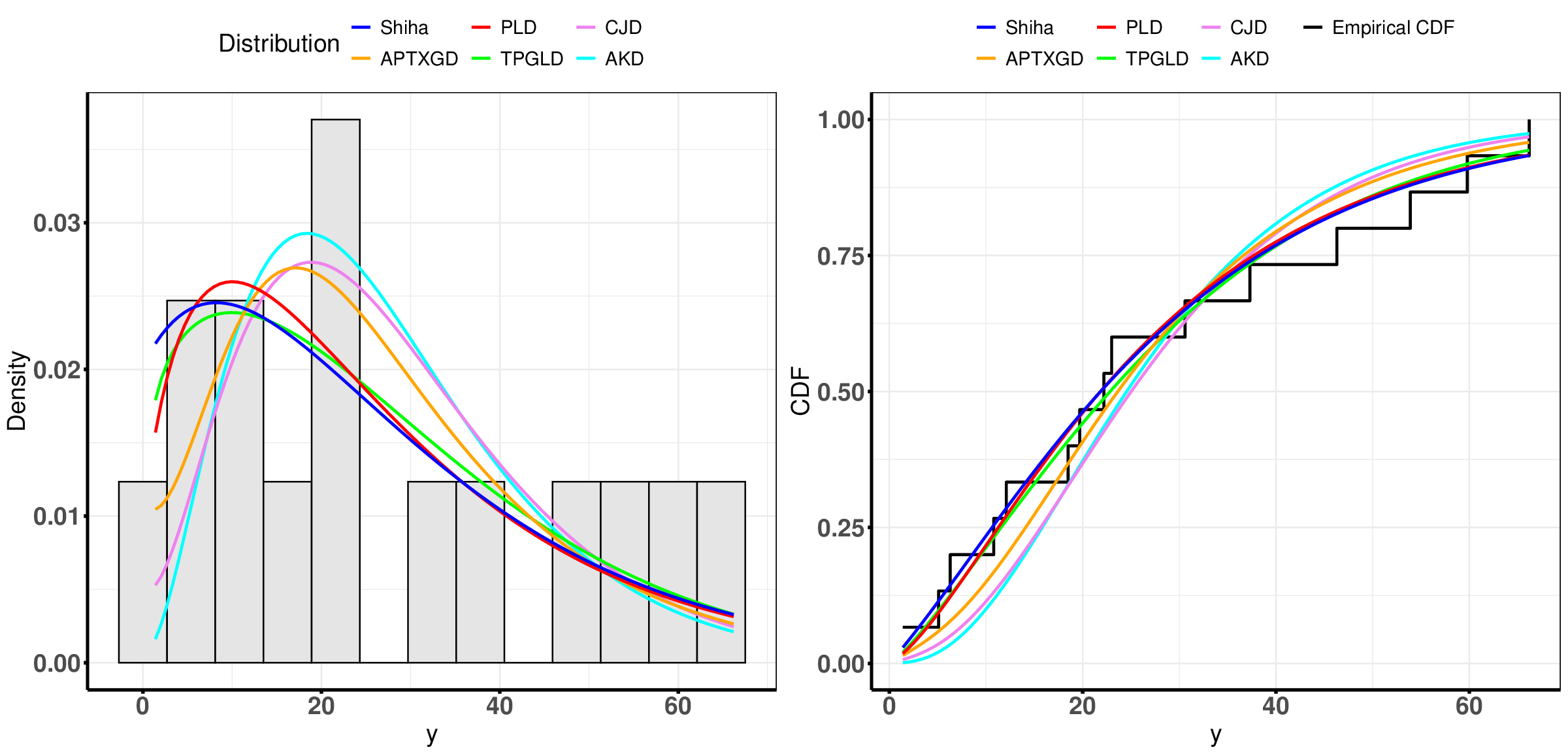}
  \includegraphics[width=14cm, height=4.5cm]{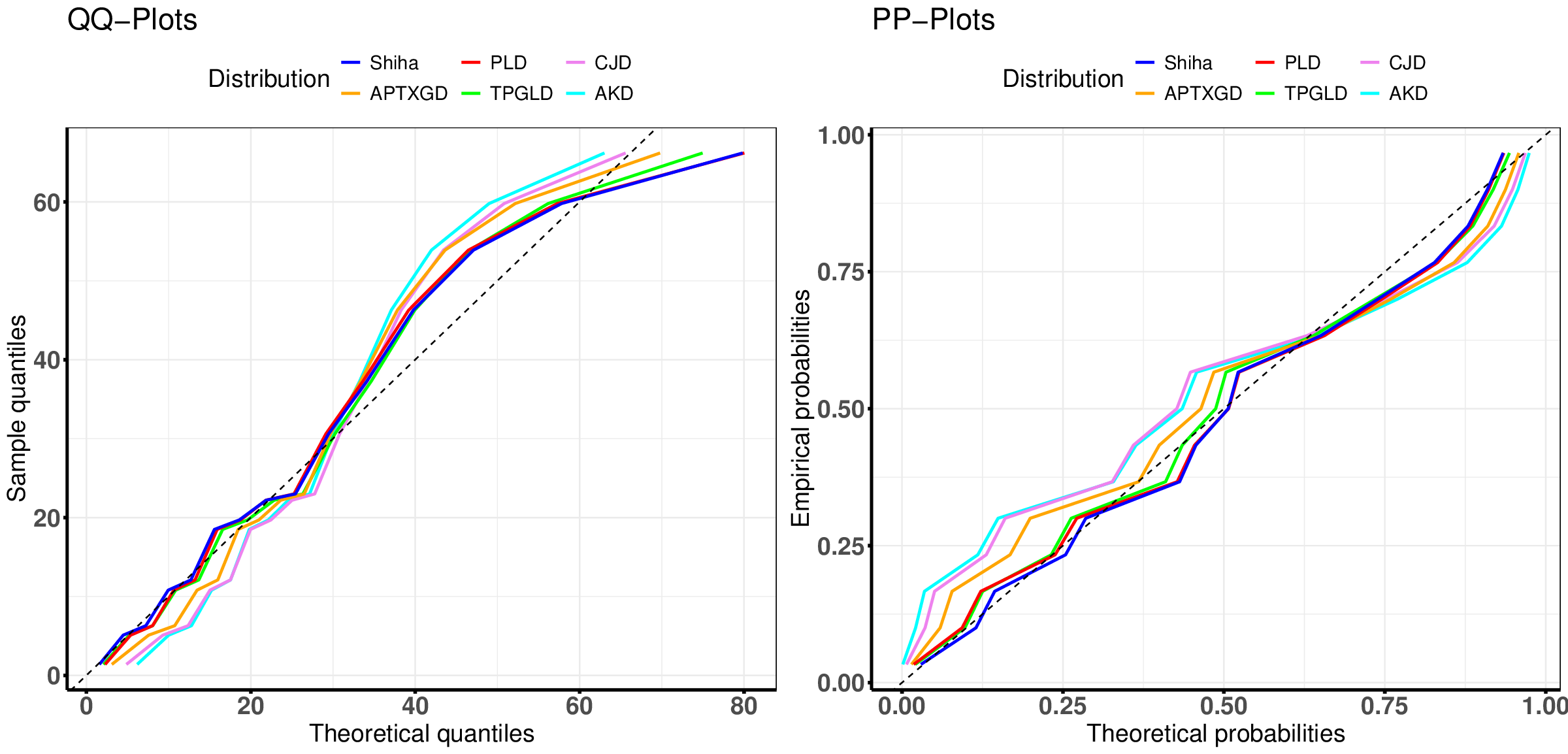}
  \caption{Graphical representation of dataset 4.}
 \label{fig:t4}
\end{figure}
Based on the results reported in Tables~\ref{T:app1}–\ref{T:app4}, the Shiha distribution exhibits lower values of AIC, BIC, A–D, and K–S, as well as higher A–D and K–S p-values. Together with the visual evidence shown in Figures~\ref{fig:t1}–\ref{fig:t4}, which demonstrate that the Shiha distribution closely matches the empirical data, we conclude that the Shiha model provides the best fit compared with the APTXGD, PLD, TPGLD, CJD, and AKD distributions across all data sets considered.
\section{Conclusions}
This paper introduces the Shiha distribution, a new and flexible lifetime model suitable for right-skewed data commonly observed in reliability engineering and environmental applications. The theoretical investigation covered several essential statistical properties, including moments, the hazard rate function, the quantile function, and entropy, revealing the distribution’s ability to capture diverse data shapes and tail behaviors. The stress--strength reliability analysis further demonstrated the applicability of the proposed model in reliability studies.
Parameter estimation was performed using the maximum likelihood method, and a Monte Carlo simulation study was conducted to examine the estimators’ performance. The results demonstrated the expected consistency of the MLEs, with both bias and mean squared error decreasing as the sample size increased.
The practical relevance  of the Shiha distribution was further verified through its application to four real datasets. Comparisons with several well-known lifetime models showed that the proposed distribution provides superior goodness-of-fit across all cases, confirming its effectiveness and flexibility.

% ------------------------------------------------------------------------

\section*{Conflict of interest}
The authors declare that there is no conflict of interests regarding the publication of this paper.

% ------------------------------------------------------------------------
\end{document}